\newtheorem{theorem}{Theorem}
\newtheorem{lemma}[theorem]{Lemma}
\newtheorem{corollary}[theorem]{Corollary}
\newtheorem{definition}[theorem]{Definition}
\newtheorem{remark}{Remark}
\definecolor{darkgreen}{RGB}{0,100,0}
\definecolor{firebrick}{RGB}{178,34,34}
\DeclareMathOperator{\Expected}{\mathbb{E}}
\newcommand{\Ex}[1]{\Expected\pbrcx{#1}}
\newcommand{\R}{{\mathbb{R}}}
\newcommand{\pth}[1]{\ensuremath{\left(#1\right)}}
\newcommand{\pbrcx}[1]{\ensuremath{\left[#1\right]}}
\newcommand{\calH}{{\cal H}}
\newcommand{\calM}{{\cal M}}
\newcommand{\calN}{{\cal N}}
\newcommand{\rad}{{\rm{rad}}}
\newcommand{\e}{{\varepsilon}}
\newcommand{\Cech}{{\v{C}ech }}
\newcommand{\conv}{{\rm conv}}
\renewcommand{\wp}{\hat{P}}
\title{Dimensionality Reduction for Persistent Homology with Gaussian Kernels}
\date{}
\author[1]{Jean-Daniel Boissonnat  
\thanks{
Supported by the French government, through the 3IA C\^ote d'Azur Investments in the Future project managed by the 
National Research Agency (ANR) with the reference number ANR-19-P3IA-0002.}
}
\author[2]{Kunal Dutta
\thanks{
Supported by the Polish NCN-SONATA Grant no. 2019/35/D/ST6/04525 
\emph{(Probabilistic tools for high-dimensional geometric inference, topological data analysis and large-scale networks)}.}
}%mandatory, please use full name; only 1 author per \author macro; first two parameters are mandatory, other parameters can be empty.
\affil[1]{Universit\'e C\^ote d'Azur, INRIA, Sophia-Antipolis, France, email: jean-daniel.boissonnat@inria.fr}
\affil[2]{Faculty of Mathematics, Informatics, and Mechanics, University of Warsaw, Poland. email: K.Dutta@mimuw.edu.pl}
\begin{document}

\maketitle

\begin{abstract}
 Computing persistent homology using Gaussian kernels is useful in the domains of topological data analysis and machine learning 
 as shown by Phillips, Wang and Zheng~[SoCG 2015]. However, contrary to the case of computing persistent homology using the Euclidean 
 distance or even the $k$-distance, it is not known how to compute the persistent homology of high dimensional data using Gaussian kernels. 
 In this paper, we consider a power distance 
 version of the Gaussian kernel distance (GKPD) given by Phillips, Wang and Zheng, and show  that the 
  persistent homology of the \Cech filtration of $P$ computed using the GKPD 
  is approximately preserved. For datasets in $\R^D$, under 
  a relative error bound of $\e\in (0,1]$, we obtain a dimensionality of $(i)$ $O(\e^{-2}\log^2 n)$ for $n$-point datasets and 
  $(ii)$ $O(D\e^{-2}\log (Dr/\e))$ for datasets having diameter $r$ (up to a scaling factor).  \\

  We use two main ingredients. The first one
   is a new decomposition of the squared radii of \Cech simplices using the kernel power distance, in terms of the 
  pairwise GKPDs between the vertices, which we state and prove. The second one is the Random Fourier Features (RFF) map of Rahimi 
  and Recht~[NeurIPS 2007], as used by Chen and Phillips~[ALT 2017].
\end{abstract}

        \section{Introduction}
        \label{sec:intro}

Persistent homology (PH) is one of the main tools to extract information from data in topological data analysis. Given a data set as a point cloud in some ambient 
space, the idea is to construct a filtration sequence of topological spaces from the point cloud, and extract topological information from this sequence. 

Two main issues are to be faced. First, the data points often live in a very high dimensional space and computing PH has 
exponential or worse dependence on the ambient dimension. It follows that PH rapidly becomes unusable once the dimension grows beyond a few dozens -- 
which is indeed the case in many applications, for example in image processing, neuro-biological networks, and data mining (see e.g. Giraud~\cite{giraud2014introduction}). 
This phenomenon is often referred to as the curse of dimensionality.  A second major difficulty comes from the fact that data is usually corrupted by noise
and outliers. Indeed, while the PH (computed using offsets to a distance function) is quite robust to Hausdorff noise, it is not hard to see that 
the presence of even a single outlier can significantly affect the PH (see e.g.~\cite{DBLP:journals/focm/ChazalCM11}). 

\paragraph*{Persistent Homology beyond the Euclidean distance} 
One approach to circumvent the issue of outliers is to use distance functions that are more robust to outliers, such as the distance-to-a-measure (DTM) and the related $k$-distance 
(for finite data sets), proposed recently by Chazal et al.~\cite{DBLP:journals/comgeo/BuchetCOS16}. This approach also has the advantage of compatibility with de-noising techniques 
such as~\cite{DBLP:journals/jocg/BuchetDWW18}. However, although DTM is a promising direction, an exact implementation can have significant cost in 
run-time. To overcome this difficulty,  approximations of the $k$-distance have been proposed recently that led to certified (although rather poor) approximations 
of PH (Guibas et al.~\cite{DBLP:journals/dcg/GuibasMM13}; Buchet et al.~\cite{DBLP:journals/comgeo/BuchetCOS16}). Recently, improved approximations have 
been reported using this approach~\cite{DBLP:conf/compgeom/AnaiCGIITU19}. 

%\paragraph*{Kernels}
Another approach to circumvent the issue of outliers involves using kernels (Phillips et al.~\cite{DBLP:conf/compgeom/PhillipsWZ15}). 
A kernel is a similarity function on pairs of points in an ambient 
space. Kernel methods are a mainstay of machine learning and data analysis.  One important reason for their popularity, is the so-called \emph{kernel trick}, 
whose underlying idea is to use a mapping $\phi$ that maps the data to some (usually infinite dimensional) Hilbert space $\calH$ where the kernel function of a pair of points $x,y\in \R^D$ 
is equal to the inner product of their images $\langle \phi (x), \phi(y)\rangle_{\calH}$. 
As explained in~\cite{DBLP:conf/compgeom/PhillipsWZ15}, computing the PH 
using the kernel distance has certain advantages compared to the Euclidean or the $k$-distance, especially for machine learning applications. These include the existence of 
$\e$-coresets~\cite{DBLP:conf/compgeom/JoshiKPV11,DBLP:conf/soda/Phillips13} as well 
as some properties of the kernel distance function, e.g. its sublevel sets map to the superlevel sets of kernel density estimates (commonly used in machine learning), 
it is Lipschitz with respect to a smoothing parameter $\sigma$ which can be varied for a fixed input, and the kernel distance between two measures is bounded 
by the Wasserstein $2$-distance when $\sigma$ goes to infinity. These properties of the Gaussian kernel distance were proved in~\cite{DBLP:conf/compgeom/PhillipsWZ15}, 
where an approximate power distance version of the kernel distance -- which we call the Gaussian Kernel Power Distance (GKPD) -- was used to compute the PH of some 
datasets and compare with the PH computed using existing 
distance functions. Further progress in constructing robust persistence diagrams has been recently made in~\cite{10.5555/3495724.3497561} using this approach.

\paragraph*{Dimension Reduction and Persistent Homology} Coming to the problem of high dimensionality, one of the simplest and most commonly used mechanisms 
to mitigate the curse of dimensionality is using random projections, as applied in the celebrated Johnson-Lindenstrauss Lemma~\cite{johnson_lindenstrauss_1984}
(JL Lemma for short). The JL Lemma states 
that for any $\e\in(0,1)$, any set of $n$ points in Euclidean space can be embedded into a space of dimension $O(\e^{-2} \log n)$ with $(1 \pm \e)$ distortion. 
Since the initial non-constructive proof of this fact by Johnson and Lindenstrauss (1984), several authors have given successive improvements, 
e.g.~\cite{DBLP:conf/stoc/IndykMRV97,DBLP:journals/rsa/DasguptaG03,ACHLIOPTAS2003671,Ailon:2006:ANN:1132516.1132597,doi:10.1002/rsa.20218,DBLP:journals/jacm/KaneN14},
addressing the issues of efficient construction and implementation, using sparse random matrices that support fast multiplication.
%In a different direction, variants of the Johnson-Lindenstrauss lemma giving embeddings into spaces of lower dimension than the JL bound have been given under 
%several specific settings. For point sets lying in regions of bounded Gaussian width, a theorem of Gordon (1988) implies that the dimension of the embedding 
%can be reduced to a function of the Gaussian width, independent of the number of points. Sarlós (2006) showed that points lying on a $d$-flat can be mapped to 
%$O(d/\e^2)$ dimensions independently of the number of points. Baraniuk and Wakin (2009) proved an analogous result for points on a smooth submanifold of Euclidean 
%space, which was subsequently sharpened by Clarkson (2008) (see also Verma (2011)), whose version directly preserves geodesic distances on the submanifold. Other 
%related results include those of Clarkson (2008) for sets of bounded doubling dimension and Alon and Klartag (2017) for general inner products, with additive error 
%only. 
Recently, Narayanan and Nelson~\cite{DBLP:conf/stoc/NarayananN19}, building on~\cite{DBLP:journals/tcs/ElkinFN17,DBLP:conf/stoc/MahabadiMMR18}, 
showed that for a given set of points or terminals, using a \emph{non-linear} mapping, it is possible to achieve dimensionality 
reduction while preserving distances from any terminal to any point in the ambient space. \\

The JL Lemma has also been used by Sheehy~\cite{DBLP:conf/compgeom/Sheehy14} and Lotz~\cite{doi:10.1098/rspa.2019.0081} to reduce the complexity of computing 
PH for point sets of bounded cardinality or Gaussian width. 
%Both Sheehy and Lotz show that the PH of a point cloud is approximately preserved under random projections, up to a  $O(1)$-multiplicative factor, for any $\e\in [0, 1]$. 
%Sheehy proves this for an $n$-point set, whereas Lotz’s generalization applies to sets of bounded Gaussian width, and 
Lotz's result also implies dimensionality reductions for sets of 
bounded doubling dimension, in terms of the spread (ratio of the maximum to minimum interpoint distance). However, their techniques involve only the usual distance to a 
point set and therefore are highly sensitive to the presence of outliers and noise as mentioned earlier.
The question of adapting the 
method of random projections in order to reduce the complexity of 
computing PH using the $k$-distance is therefore a natural one, and was addressed by Arya et al. in~\cite{DBLP:journals/jact/AryaBDL21} 
who showed that under random projections, the same bounds apply for the preservation of PH using the $k$-distance, as for pairwise distances. 

\paragraph*{Dimensionality Reduction for Persistent Homology using Kernel Distance} In computations involving the kernel distance, as discussed earlier, 
there is a plethora of techniques that address the issue of large data size. However, especially for computing the PH, 
there are not many techniques known to address the question of large ambient dimensionality. For instance the JL Lemma doesn't work in this case because 
the kernel distance function involves higher powers of the squared Euclidean distance. The JL Lemma does allow the construction of 
an approximate embedding of the kernel power distance as a Euclidean distance. However, such a construction would involve an $n\times n$ Cholesky decomposition, as well as  
other computationally expensive procedures. Further, the embedding would become dependent on the input data points and therefore would no longer be 
data-oblivious.\footnote{Probabilistic dimensionality reduction techniques are not completely independent of the data, since the success of the algorithm needs to be
verified by comparing the original and projected data points.}

In another direction, a mapping given by Chen and Phillips~\cite{DBLP:conf/alt/ChenP17}  
using the Random Fourier Features (RFF) of Rahimi and Recht~\cite{DBLP:conf/nips/RahimiR07}, preserves pairwise kernel distances up to a $(1\pm \e)$-factor. 
However it does not preserve distances between measures, which a priori seems necessary for computing the PH under the framework of~\cite{DBLP:conf/compgeom/PhillipsWZ15}. 
Similarly, other different approaches (e.g.~\cite{10.5555/3322706.3322718,10.5555/3294996.3295197,10.5555/3305381.3305408,8104131}) 
are not efficient in preserving distances between point sets, or between general measures.
Another embedding obtained by Phillips and Tai~\cite{DBLP:conf/approx/PhillipsT20} gives a relative 
approximation with a small additive error for kernel distances between sets of points.
Even given such a mapping (i.e. preserving kernel distances between point distributions), it is not clear that it can 
preserve the PH, since this involves preserving intersections of multiple balls under a power distance 
(see e.g.~\cite{DBLP:journals/jact/AryaBDL21,DBLP:conf/compgeom/PhillipsWZ15}). A key issue under the GKPD is that  
the weights associated to the data points are not just a function of the points themselves, but of the pairwise kernel distances of 
all the points in the data set. This means that given any mapping of the data points to a lower dimensional space, the weights of the points must be 
recomputed in the new space. A similar problem arises in the case of computing the persistent homology under the $k$-distance. Addressing this 
was asked as an open problem in~\cite{DBLP:conf/compgeom/Sheehy14}, and recently answered by Arya et al.~\cite{DBLP:journals/jact/AryaBDL21}. 
However, their solution crucially uses the linearity of the dimensionality-reducing map as well as certain properties of minimum enclosing balls under the $k$-distance, 
which they prove, and hence does not apply to the GKPD. 
As mentioned in the conclusion of~\cite{DBLP:journals/jact/AryaBDL21}, 
it is possible to obtain a constant factor approximation of the PH, essentially by approximating the kernel distance by the Euclidean distance for small values
of the Euclidean distance. 
However, the question of finding a $(1+\e)$-factor approximation for the GKPD remained open.

\subsection{Our Contribution}
In this paper, we show that given any $\e\in (0,1]$, it is possible to approximate the PH of an $n$-point dataset, computed using the power distance version 
of the Gaussian kernel distance 
(used by Phillips et al.~\cite{DBLP:conf/compgeom/PhillipsWZ15}), by a $(1+\e)$-factor 
while reducing the dimensionality down to $O\pth{\e^{-2}\log n}$.
Our results are analogous to the ones of Sheehy~\cite{DBLP:conf/compgeom/Sheehy14} and Lotz~\cite{doi:10.1098/rspa.2019.0081} 
for PH computed using Euclidean distances, and Arya et al.~\cite{DBLP:journals/jact/AryaBDL21} for the Euclidean $k$-distance. 
Further, since our target dimension is of the same order as the target dimension of the Johnson Lindenstrauss bounds, which 
are optimal for the Euclidean distance up to constant factors, and the Gaussian kernel distance is a Lipschitz function of the Euclidean distance, 
this implies that our target dimension is also optimal (up to constant factors).  

Informally, our main theorem states that 
   given a set of points in a high-dimensional space, there exists an efficiently computable  mapping of the points onto a lower dimensional target space, 
which approximates their persistent homology computed using a Gaussian kernel, to an arbitrary degree. Further, the dimensionality of the target space only 
depends inverse quadratically on the desired accuracy parameter and logarithmically on the $(i)$ number of points or $(ii)$ the diameter of the dataset.  

        The formal version, Theorem~\ref{thm:gauss-kern-dimred}, is in Section~\ref{sec:main-result-proof}. It yields a map  
        allowing us to approximately compute the PH of a set of points in a high dimensional space, using the GKPD, while actually working with Euclidean distances 
        in a lower-dimensional space. Thus, it affirmatively answers the question asked in the conclusion of~\cite{DBLP:journals/jact/AryaBDL21}. 

        Our main tool, is a new decomposition theorem showing that the squared radius of a minimum enclosing ball of a set of weighted points, computed using 
        the GKPD, can be expressed as a linear combination of pairwise power distances between the points. We shall show that although the 
        GKPD is non-linear, when lifted to a certain Hilbert space, it has several nice properties, which we then use to prove the decomposition theorem.
         
        %Further, the distance in the target space is Euclidean, so the dimensionality can be 
        %reduced further using the Johnson-Lindenstrauss lemma (or one of its variants).

        %\begin{theorem}
        %\label{thm:gauss-kern-dimred} 
        %Given $\rho>0$, $\e,\delta \in (0,1)$ and a finite set $P$ of $n$ points in $B_D(0,\rho)\subset \R^D$, let $S:\R^D\to \R^d$ be a random function drawn from 
        %the sketch $\GaussianSketchHD_{s_1,\ldots,s_r,r}$, with $s_i=O\pth{\frac{i}{\e^2}}$, $i=1,\ldots,r$,  
        %$r = \Theta\pth{\frac{\log\pth{4n\exp\pth{2\rho^2}}}{\log\pth{\rho^{-2}\log\pth{n\exp\pth{2\rho^2}}}}}$ and $d=O\pth{\frac{r^2}{\e^2}}$. 
        %Let $G:2^{B_D(0,\rho)}\to \R^d$ be defined as $\forall Y\subset B_D(0,\rho)$, $G(Y) = \sum_{y\in Y} S(y)$.
        %Then with probability at least $9/10$, the \Cech filtration computed using the approximation of the kernel distance map, for the Gaussian kernel 
        %$K(x,y):= \exp\pth{-\|x-y\|^2/2\sigma^2}$, $\check{C}_{\alpha}(\widehat{P})$, 
        %and the corresponding weighted filtration under the map $G$ are $((1-\e)^{-1}+o(1))$-interleaved.
        %Further, the map $S: \R^D\to \R^d$ can be computed in time $O\pth{\frac{r^3}{\e^2}\log\frac{r}{\e}+ r^2D}$.
        %\end{theorem}

     \paragraph*{Organization of paper} The rest of this paper is organized as follows. In Section~\ref{sec:backgd} we provide some necessary background and preliminary details.
     In Section~\ref{sec:mepb} we prove some properties of minimum enclosing balls of weighted points in a Hilbert space. In Section~\ref{sec:low-dist-map-pd} we 
     study the stability of \Cech filtrations constructed using the GKPD, under low-distortion maps, proving Theorem~\ref{thm:inner-prod-dimred-gauss-kern}. 
     In Section~\ref{sec:main-result-proof} 
     we prove our main theorem, showing how the \emph{Random Fourier Features} map of Rahimi and Recht, together with our new decomposition result (Theorem~\ref{thm:inner-prod-dimred-gauss-kern}) 
     %proved in Section~\ref{sec:other-results-proofs}, 
     gives the proof of Theorem~\ref{thm:gauss-kern-dimred}. %, which  
     %states that the PH computed using the kernel distance in the original high-dimensional space interleaves with the PH obtained after applying a 
     %certain class of low-distortion maps, is proved in Section~\ref{sec:other-results-proofs}. 
     We conclude with a few remarks and open questions in Section~\ref{sec:concl}.

        \section{Background}
        \label{sec:backgd}

        We briefly introduce some of the definitions and tools needed for our results and proofs. For a deeper picture, the 
        references~\cite{DBLP:journals/comgeo/BuchetCOS16,DBLP:journals/focm/ChazalCM11} 
        would be greatly beneficial to the reader. We also refer the interested reader to~\cite{DBLP:journals/jact/AryaBDL21,DBLP:conf/compgeom/PhillipsWZ15} for further reading.

        \subsection{Persistent Homology}
        \label{sec:persh}
\noindent {\bf Simplicial Complexes and Filtrations} 
Let $V$ be a finite set. 
      An (abstract) simplicial complex with vertex set $V$ is a set $ K $ of finite
      subsets of $V$  such that if $ A \in K $ and $ B \subseteq A$,
      % B\neq \emptyset$, 
then $ B \in K $. The sets in $ K $ are called
      the simplices of $ K $. A simplex $F \in K$ that is strictly contained in a simplex 
      $A\in K$, is said to be a \emph{face} of $A$.

	A simplicial complex $ K $ with a function $ f: K \to
        \mathbb{R} $ such that $ f(\sigma) \le f(\tau) $ whenever
        $\sigma$ is a face of $\tau$ is a filtered simplicial
        complex. The sublevel set of $f$ at $ r \in \mathbb{R}$, $
        \mathnormal{f}^{-1}\left(-\infty,r \right] $, is a subcomplex of $ K $. 
	By considering different values of $ r $, we get a nested
        sequence of subcomplexes (called a filtration) of $ K $, $ \emptyset= K^0\subseteq K^1 \subseteq ... \subseteq K^m=K $, where $ K^{i} $ is the sublevel set at value $ r_i $. 
The        \Cech filtration associated to  a finite set $P$ of points
in $\mathbb{R}^D $ plays an important role in Topological Data Analysis.

	\begin{definition}[\Cech Complex]
 The \Cech complex $\check{C}_\alpha(P)$ is the set of simplices $\sigma\subset P$ such that rad($\sigma$) $\le$ $\alpha$, where $\rad(\sigma)$ is the radius of the smallest 
	enclosing ball of $ \sigma $, i.e.
$$\rad(\sigma) \leq \alpha \Leftrightarrow \exists x\in \R^D,\; \forall p_i\in \sigma, \; \|x-p_i\| \leq \alpha.$$
\end{definition}
When $\alpha$ goes from $0$ to $+\infty$, we obtain the \Cech
filtration of $P$.
$\check{C}_\alpha(P)$ can be equivalently defined as the \emph{nerve}  of the
closed balls
          $\overline{B}(p,\alpha)$, centered at the points in $P$ and
          of radius $\alpha$:  \[ \check{C}_\alpha(P)
          = \{ \sigma \subset P | \cap_{p \in
            \sigma}\overline{B}(p,\alpha) \neq \emptyset \}. \] 
	By the Nerve Lemma (e.g.~\cite{ghrist2014elementary,Borsuk1948}), we know that the union of balls
        $U_\alpha =\cup_{p\in P} \overline{B}(p,\alpha) $, $p\in P$,
	and $ \check{C}_\alpha(P) $ have the same homotopy type. 
Moreover, since the union of balls of a good sample $P$ of a reasonably regular shape $X$ captures the homotopy type of $X$, 
computing the \Cech complex of $P$ will provide the homotopy type of $X$. We also recall that a simpler complex called the 
{\em $\alpha$-complex} of $P$ (see e.g.~\cite{DBLP:books/daglib/0025666}) captures also the homotopy type of 
$\bigcup_{p\in P} \overline{B}(p,\alpha)$. Our results will apply to both complexes.

\vspace{2mm}

\noindent {\bf Persistence Diagrams.} 
Persistent homology is a means to compute and record the changes in the topology of the filtered complexes as 
the parameter $\alpha$ increases from zero to infinity. Edelsbrunner, Letscher and Zomorodian~\cite{DBLP:journals/dcg/EdelsbrunnerLZ02} 
gave an algorithm to compute the PH, which takes a filtered simplicial complex as input, and 
outputs a sequence $(\alpha_{birth},\alpha_{death})$ of pairs of real numbers. Each such pair corresponds to a topological feature, and records the values 
of $\alpha$ at which the feature appears and disappears, respectively, in the filtration. Thus the topological features 
of the filtration can be represented using this sequence of pairs, which can be represented either as points in the 
extended plane $\bar{\R}^2 = \pth{\R\cup \{-\infty,\infty\}}^2$, 
called the \emph{persistence diagram} or as a sequence of barcodes (the \emph{persistence barcode}) (see, e.g.,~\cite{DBLP:books/daglib/0025666}).
A pair of persistence diagrams $\mathbb{G}$ and $\mathbb{H}$ corresponding to the filtrations $(G_\alpha)$ and $(H_\alpha)$ respectively,
are \emph{multiplicatively $\beta$-interleaved}, $(\beta \geq 1)$, if for all $\alpha$, we have that
$G_{\alpha/\beta}  \subseteq H_{\alpha} \subseteq G_{\alpha\beta}$. We shall crucially rely on the fact that a given 
persistence diagram is closely approximated by another one if they are multiplicatively $c$-interleaved, with $c$ close to $1$ 
(see e.g.~\cite{DBLP:books/daglib/0039900}).

	The Persistent Nerve Lemma \cite{DBLP:conf/compgeom/ChazalO08} shows that the PH of the \Cech filtration is the 
	same as the homology of the sublevel filtrations of the distance function. The same result also holds for the 
        Delaunay filtration~\cite{DBLP:conf/compgeom/ChazalO08}.

\vspace{2mm}

\subsection{\bf Distance to Measure; PH with Power Distances}
\label{par:pers-homol-pow-dist}
 The most common approach in Topological Data Analysis is to consider the distance function given by the shortest distance to 
a point in $V$, i.e. $d_V:\R^D \to \R_+$ is $d_V(x)= \inf_{y\in V} d(x,y)$. Given this distance function one can construct 
the \Cech filtration by considering the $\alpha$-offsets of $d_V(.)$ (i.e. the sublevel sets $\{x\in \R^D \;\;|\;\; d_V(x)\leq \alpha\}$)
as unions of balls, and computing the nerve of these unions.
However, as mentioned in the Introduction, the PH obtained from this choice of distance function is highly sensitive 
to outliers, and can be significantly altered even by a single outlier. To address this problem, Chazal et al.~\cite{DBLP:journals/comgeo/BuchetCOS16} introduced the 
notion of \emph{distance to measure} (DTM). 
%Given a probability measure $\mu$ on $\R^D$ and $m_0\in [0,1]$, the \emph{distance to the measure}
%$\mu$, $d_{\mu,m_0}^{CCM}:\R^D\to \R_+$, is defined as 
%\[ d_{\mu,m_0}^{CCM} := \frac{1}{m_0}\pth{\int_{m=0}^{m_0}\delta_{\mu,m}(x)^2}^{1/2},\]
%where $\delta_{\mu,m}(x):= \inf_{r>0}\{\mu(B(x,r)\geq m)\}$. 
In principle, as Chazal et al. showed, the distance to measure function can be used to compute a \Cech filtration from $P$. However in 
practice, computing the nerve of the $\alpha$-offsets requires measuring the distance at every point in the space, and so, an  
approximation to the DTM function % $d_{\mu,m}^{CCM}(x)$ 
is required. This is achieved by considering a finitary version of this distance, called the 
\emph{$k$-distance}, which translates to a power distance on the set of $k$-barycenters of the original point 
cloud~\cite{DBLP:journals/comgeo/BuchetCOS16,DBLP:journals/jact/AryaBDL21}. 
%Preserving distances between $k$-barycenters allows us to 
%preserve the exact distance to measure, however this is not preferable because the number of barycenters can be large. 
In general, power distances are often used to 
approximate unwieldy distance functions for computing the PH. The idea is to approximate the square of the distance to $P$ at a point 
$x\in \R^D$ by the sum of an easily computable squared distance to a point $p \in P$, together with the square of the \emph{weight} of $p$: 
$d_P(x)^2 := d'(x,p)^2 + w(p)^2$, 
where $d'(x,p)$ is chosen to be a simpler distance function, easier to compute than $d_P(x)$,
and $w(p)$ is the \emph{weight} of $p$, which is set to be a local approximation of the distance function $d_P(.)$ for points in the 
neighbourhood of $p$. In the following paragraphs, we discuss the computation of persistent homology with power distances.

        Given a set $X$ and a distance function $d:X\times X\to \R$, the pair $(X,d)$ is a \emph{metric space}
        if the distance function $d(.,.)$ is reflexive, symmetric and obeys the triangle inequality. 
        Let $\widehat{P}$ be a set of weighted points $\widehat{p}= (p,w(p))$ in a metric space $(\calM,d)$. 
        In the metric space $(\calM,d)$, the \emph{power distance} between two weighted points $\widehat{p}$ and $\widehat{q}$ is defined as 
        \[ D(\widehat{p},\widehat{q}) = d(p,q)^2 - w(p)-w(q).\]
        %Unweighted points have weight zero.
        %Our goal in this section is to extend the definitions and results of the previous section, to the case when the distance function is a power distance of 
        %a general metric. 
        Accordingly, we need to extend the definition of the
        \Cech complex to sets of weighted points.

	\begin{definition}[Weighted \Cech Complex]
          Let $\wp = \{ \widehat{p}_1,...,\widehat{p}_n\}$  be a set of weighted points, where
          $\widehat{p}_i=(p_i,w_i) \in \R^D\times \R$. The
          $\alpha$-\Cech complex of $\wp$, $ \check{C}_\alpha(\wp)$,
          is  the  set of all simplices $\sigma$ satisfying
\[
         \exists x, \; \forall
           p_i\in \sigma, \; d(x,p_i)^2 \leq
           w_i+\alpha^2 \;\;\; \Leftrightarrow \;\;\; \exists x, \; \forall
           p_i\in \sigma, \; D(x,\widehat{p}_i)
           \leq \alpha^2.\]
(Here $D(x,\widehat{p}_i)$ indicates the power distance between the unweighted point $x$ (i.e. $w(x)=0$) and the weighted point $p$.)
In other words,  
 the       $\alpha$-\Cech complex of $\wp$ is  the nerve of the closed
           balls
           $\overline{B}(p_i, r_i^2=w_i+\alpha ^2)$, centered at the
           $p_i$ and of squared radius $w_i+\alpha ^2$ (if negative,
           $\overline{B}(p_i, r_i^2)$ is
           imaginary). 
         \end{definition}

        The notions of weighted \Cech filtrations and their PH now follow naturally.

	In the Euclidean case, we %equivalently 
        defined the $\alpha$-\Cech complex as the collection of simplices
        whose smallest enclosing balls have radius at most
        $\alpha$. We can proceed similarly in the weighted case.
Let $\widehat{X}\subseteq \widehat{P}$. We define the  {\em radius of   $\widehat{X}$} as
\begin{eqnarray}
  \rad ^2 (\widehat{X}) &=& \min_{x\in \R^{D}} \max_{\widehat{p}_i\in \widehat{X}} D({x},\widehat{p}_i), \label{eqn:def-rad-pow-dist}
\end{eqnarray}
and the weighted center or simply the \emph{center} of $\widehat{X}$  as the point, denoted by $c (\widehat{X})$, where this
minimum is reached, i.e.  
\begin{eqnarray}
 c &=& c(\widehat{X}) = \arg\min_{x\in \R^{D}} \max_{\widehat{p}_i\in \widehat{X}} D({x},\widehat{p}_i). \label{eqn:def-cent-pow-dist}
\end{eqnarray}

        \subsection{Kernels; Gaussian Kernel Power Distance}
        \label{sec:kern-rff}
        A \emph{kernel} $K:\R^D\times \R^D \to \R$ is a similarity function on points in $\R^D$, such that 
        $K(x,x)=1$ for all $x\in \R^D$. \emph{Reproducing kernels} are a large class of kernels, having the property 
        that given a reproducing kernel $K$, there exists a lifting map $\phi$ to a Hilbert space $\calH_K$ such 
        that the kernel function lifts to the inner product on $\calH_K$, i.e. for all $x,y\in \R^D$, 
        $K(x,y) = \langle \phi(x),\phi(y)\rangle_{\calH_K}$. 

        The natural distance function induced by the norm on the Hilbert space $\calH_K$ gives a distance using the kernel 
        on $\R^D$, as follows. 
        \begin{eqnarray}
           \|\phi(x)-\phi(y)\| &=& \sqrt{\langle \phi(x)-\phi(y),\phi(x)-\phi(y)\rangle_{\calH_K}} \notag \\
                               &=& \sqrt{\langle\phi(x),\phi(x)\rangle_{\calH_K}+
                                         \langle\phi(y),\phi(y)\rangle_{\calH_K}-2\langle\phi(x),\phi(y)\rangle_{\calH_K}} \notag \\
                               &=& \sqrt{K(x,x)+K(y,y)-2K(x,y)} \;\;=\;\; \sqrt{2(1-K(x,y))}, \notag
        \end{eqnarray}
        where the last step follows since $K(x,x)=1$ for all $x\in \R^D$.
        For \emph{characteristic kernels}, a slightly smaller subset of reproducing kernels, this distance function is a metric~\cite{10.5555/1953048.2021077}.
        Our main kernel of interest in this paper is the Gaussian kernel, given by  
        \[K(x,y) = \exp\pth{-\|x-y\|^2/2\sigma^2}.\]
        For $x,y\in \R^D$, the kernel distance $D_K(.,.)$ for the Gaussian kernel is thus, 
        \begin{eqnarray}
           D_K(x,y)^2 &=& 2(1-e^{-\|x-y\|^2/2\sigma^2}). \label{eqn:dksqxy-kxy} 
        \end{eqnarray}
        %Legacy Text
        %Note that expressing the kernel distance as a norm as above, also allows us to consider distances between distributions of points. This can be seen as follows. Given 
        %a distribution $\mu$ on $\R^D$, the point given by $p_{\mu} := \int_{\R^D}\phi(x)d\mu(x)$ lies in $\calH_K$, thus one can define the distance between 
        %two distributions $\mu$ and $\nu$, to be the norm $\|p_{\mu} -p_{\nu}\|$. In particular when $\mu$ is the uniform distribution on a finite 
        %point set $P$, and $\nu$ is the Dirac distribution on a single point $x$, we get   
        %\begin{eqnarray*}
        %   D_K(P,x)^2 &=& 1+\frac{1}{|P|^2}\sum_{y,y'\in P}e^{-\|y-y'\|^2/2\sigma^2}-\frac{2}{|P|}\sum_{y\in P}e^{-\|x-y\|^2/2\sigma^2}  \\
        %              &=& \frac{1}{|P|}\sum_{y\in P}D_K^2(x,y)- \frac{1}{2|P|^2}\sum_{y,y'\in P}D_K^2(y,y'),
        %\end{eqnarray*}
        %where the last line follows using Eqn.~\eqref{eqn:dksqxy-kxy}.
        %Legacy Text Ends
        
        % for $ x,y \in \R^D$.

        \paragraph*{Kernel Distance to Measure}
        The definition of kernels as a similarity function on pairs of points in $\R^D$ can be extended naturally to 
        a similarity function for pairs of measures $\mu,\nu$ on $\R^D$.
        \[ \kappa(\mu,\nu) := \int_{x\in \R^D}\int_{y\in \R^D}K(x,y)d\mu(x) d\nu(y).\]
        When $\mu,\nu$ are the empirical measure on $P$ and $Q$ respectively, we get 
        \[ \kappa(\mu,\nu) := \frac{1}{|P||Q|}\sum_{p,q\in \R^D}K(p,q).\]
        Now let $ \mu $ be the empirical measure on $ P $ defined as $ \mu = \dfrac{1}{|P|}\sum_{p\in P}\delta_p$, where $ \delta_p $ 
		is the Dirac delta measure on $ P $, and $\nu$ be $\delta_p$ for a fixed point $p\in P$.
        As defined above, one can think of the \emph{kernel distance} of the point mass $\delta_p$ to the measure $\mu$, as a function of $x$, 
        which we denote by $d_{\mu}^K(x):= D_K(\mu,x)$. In~\cite{DBLP:conf/compgeom/PhillipsWZ15}, Phillips, Wang and Zheng investigated the 
        persistent homology of point sets using $d_{\mu}^K(.)$, when $K$ is a Gaussian kernel. They showed~(\cite{DBLP:conf/compgeom/PhillipsWZ15}, Theorems 4.1 and 4.2) 
        that the offsets of $P$ obtained using 
        sublevel sets of the distance function $d_{\mu}^K(x)$ in a given range of thresholds, are homotopically equivalent as long as there is no critical point of 
        $d_{\mu}^K(x)$ in this range, and are stable under perturbations of the input with bounded Hausdorff distance. 
        Thus the offsets of $d_{\mu}^K$ can be used to estimate the topological properties of the point cloud $P$.

       \paragraph*{Gaussian Kernel Power Distance}
        As mentioned in the Introduction and in Section~\ref{par:pers-homol-pow-dist}, computing the persistent homology using $d_{\mu}^K$ precisely would require computing 
        $d_{\mu}^K$ everywhere in $\R^D$. So in order to avoid this computational expense, Phillips, Wang and Zheng~\cite{DBLP:conf/compgeom/PhillipsWZ15} 
        approximated $d_{\mu}^K$ by a power distance $f_{\mu}^K(x)$, using weights on 
        the points in $P$: $f_{\mu}^K(x)^2 := \min_{p\in P} \pth{D_K^2(x,p)-w(p)}$, where $w:P\to \R$ is the \emph{Gaussian kernel weight function}  
        at $p$, defined as 
        %\[ w(p) := -D_K^2(\mu,p) = -\pth{\kappa(\mu,\mu) + \kappa(p,p) - 2\kappa(\mu,p)}.\] 
        \begin{eqnarray}
           w(p) &:=& -D_K^2(\mu,p) = -\pth{\frac{1}{|P|}\sum_{y\in P}D_K^2(p,y)- \frac{1}{2|P|^2}\sum_{x,y\in P}D_K^2(x,y)}.\label{eqn:defn-wt-func}
        \end{eqnarray}
        \begin{remark} \label{rem:gkpd} Note that the weight $w(p)$ depends not only on the point $p$, but on the pairwise squared Gaussian kernel distances  
        of the entire set $P$. This fact introduces a crucial requirement for any embedding or dimensionality reduction procedure: the kernel weight 
        function needs to be recomputed in the image space. 
        \end{remark}
        %This can be extended to a weight function on $\R^D$ by taking the weight of $x$ to be zero when $x\not\in P$.

        %Using the definition of $ \kappa $ we get,
	%	\begin{align*}
	%		w(p) &=& -\pth{\dfrac{1}{n^2}\sum_{x\in P}\sum_{y\in P}\exp\pth{-\dfrac{\|x-y\|^2}{2\sigma^2}} + 1 - 
        %                                 \dfrac{2}{n}\sum_{x \in P}\exp\pth{-\dfrac{\|x-p\|^2}{2\sigma^2}}}.
	%	\end{align*}
	%	Since $ D_K^2(x,y) = 2(1-e^{-\|x-y\|^2/2\sigma^2}) $ for $ x,y \in P $, we get
        %        \begin{eqnarray}
	%	  w(p) &=& \dfrac{1}{2n^2} \sum_{x\in P}\sum_{y\in P}D_K^2(x,y) - \dfrac{1}{n}\sum_{x\in P }D_K^2(x,p). \label{eqn:wgt-lin-comb}
        %        \end{eqnarray}
        Thus from~\eqref{eqn:defn-wt-func}, the \emph{Gaussian Kernel Power Distance} (GKPD) between a point $x\in \R^D$ and the pointset $P$, can be expressed as  
        \begin{eqnarray}
              f_{\mu}^K(x)^2 &:=& \min_{p\in P} \pth{D_K^2(x,p) - w(p)}  \notag \\
                         &=&\min_{p\in P} \pth{D_K^2(x,p) +\frac{1}{|P|} \sum_{y\in P} D_K^2(y,p)- \frac{1}{2|P|^2}\sum_{y,z\in P}D_K^2(y,z)}. \label{eqn:fKP-defn}
        \end{eqnarray}
        It can be observed that the level sets of the kernel power distance are unions of balls (since the Gaussian kernel distance is a radial function of the 
        Euclidean distance). Moreover Phillips, Wang and Zheng~\cite{DBLP:conf/compgeom/PhillipsWZ15}[Theorem 3.1, Lemma 3.1] showed that for any $x\in \R^D$, 
        \[ d_{\mu}^K(x)^2\leq 2f_{\mu}^K(x)^2 \leq 4d_{\mu}^K(x)^2 + 6D_K^2(p,x).\] 
        This shows that up to constant factors, the GKPD $f_{\mu}^K$ approximates the Gaussian power distance with respect to the Gaussian kernel distance to the uniform measure,  
        $d_{\mu}^K$.

\section{Minimum Enclosing Power Balls}
\label{sec:mepb}

        As stated in the Introduction, the central technical idea behind our main result is a decomposition theorem for the squared radius of the minimum enclosing ball of 
        a set of weighted points, computed using the Gaussian kernel power distance. To prove such a decomposition theorem, we need to understand some properties of 
        minimum enclosing balls of collections of weighted points under power distances. However, since the Gaussian kernel distance is non-linear, 
        in order to prove our new results on the minimum enclosing balls of weighted points 
        under the GKPD, it will be crucial to lift the points onto a Hilbert space where the Gaussian kernel distance corresponds to the norm of an inner product. 
        In this section, we shall therefore investigate such minimum enclosing balls in a Hilbert space with a norm given by an inner product and prove some new 
        properties which will be crucial in the proof of our technical result in the next section. Note that since a Euclidean space is also a Hilbert space and the usual 
        $\ell_2$ distance appears as the norm of an inner product, all our results also apply in the usual Euclidean setting.\\

        Recall the definitions of the weighted \Cech complex 
        using a power distance from Section~\ref{par:pers-homol-pow-dist}. %, the lemma can be stated as follows.
        We are given a set $\{\widehat{p}_1,\ldots,\widehat{p}_k\}$ of weighted points in a Hilbert space $\calH$, where for all $i\in [k]$, $\widehat{p}_i := (p_i,w(p_i))$ 
        denotes the point $p_i\in \calH$, having weight $w(p_i)$. The power distance between a pair of points $x,y\in \calH$ is given by 
        \[ D(\hat{x},\hat{y}) := \|x-y\|_{\calH}^2 - w(x)-w(y),\]
        where $\|.\|_{\calH}$ is the norm induced by the inner product of the Hilbert space $\calH$ and $w(x)=0$ if $x\not\in P$.
        Let $\widehat{\sigma}$ denote the simplex formed by the points. Using $D(\hat{x},\hat{y})$ defined above, we define 
        the center $c(\widehat{\sigma})$ and radius $\rad(\widehat{\sigma})$ as in equations~\eqref{eqn:def-cent-pow-dist} and~\eqref{eqn:def-rad-pow-dist} 
        respectively.

        \begin{lemma} 
        \label{l:uniq-cent-rad} 
             The center $c(\widehat{\sigma})$ and radius $\rad(\widehat{\sigma})$ are unique.
        \end{lemma}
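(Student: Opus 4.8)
The plan is to recast the problem as the minimization of a single scalar function and exploit convexity. Define
\[ g(x) := \max_{\widehat{p}_i \in \widehat{\sigma}} D(x,\widehat{p}_i) = \max_{i}\pth{\|x-p_i\|_{\calH}^2 - w(p_i)}, \]
so that by definition $\rad^2(\widehat{\sigma}) = \min_{x\in\calH} g(x)$ and $c(\widehat{\sigma})$ is the point where this minimum is attained. It therefore suffices to show that $g$ attains its minimum at exactly one point: that point is then the unique center, and the minimum value is the unique squared radius.

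First I would expose the strict convexity of $g$. Expanding $\|x-p_i\|_{\calH}^2 = \|x\|_{\calH}^2 - 2\langle x,p_i\rangle_{\calH} + \|p_i\|_{\calH}^2$ and pulling out the term $\|x\|_{\calH}^2$, which is independent of $i$, gives
\[ g(x) = \|x\|_{\calH}^2 + \max_i\pth{-2\langle x,p_i\rangle_{\calH} + \|p_i\|_{\calH}^2 - w(p_i)}. \]
The maximum on the right is a pointwise maximum of finitely many affine functions of $x$, hence convex, while $\|x\|_{\calH}^2$ is strictly convex; the sum of a strictly convex and a convex function is strictly convex, so $g$ is strictly convex. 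Strict convexity immediately forces \emph{at most one} minimizer, since if two distinct minimizers existed their midpoint would yield a strictly smaller value, a contradiction.

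It remains to guarantee that a minimizer \emph{exists}, which requires a little care because $\calH$ may be infinite-dimensional. Here I would reduce to a finite-dimensional subproblem. Let $A = \aff(p_1,\ldots,p_k)$ and, for an arbitrary $x\in\calH$, let $x_A$ denote its orthogonal projection onto $A$. Since $x - x_A$ is orthogonal to the direction space of $A$, and $x_A - p_i$ lies in that space, the Pythagorean identity gives $\|x-p_i\|_{\calH}^2 = \|x_A - p_i\|_{\calH}^2 + \|x-x_A\|_{\calH}^2$ for every $i$, whence $g(x) = g(x_A) + \|x-x_A\|_{\calH}^2 \geq g(x_A)$. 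Thus every candidate minimizer lies in the finite-dimensional affine set $A$, and it suffices to minimize $g$ over $A$. On $A$ the Cauchy–Schwarz bound
\[ g(x) \geq \|x\|_{\calH}^2 - 2\pth{\max_i \|p_i\|_{\calH}}\|x\|_{\calH} + \min_i\pth{\|p_i\|_{\calH}^2 - w(p_i)} \]
shows $g$ is coercive, so its sublevel sets are bounded; as $g$ is continuous on the finite-dimensional $A$, it attains its minimum there. Together with the at-most-one conclusion from strict convexity, this yields a unique minimizer, establishing uniqueness of $c(\widehat{\sigma})$, and then $\rad^2(\widehat{\sigma}) = g(c(\widehat{\sigma}))$ is a single well-defined number.

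I expect the only genuinely delicate point to be existence of the minimizer in the possibly infinite-dimensional $\calH$, which is precisely why the projection-onto-$\aff(p_1,\ldots,p_k)$ reduction is the crux: it localizes the optimization to a finite-dimensional affine subspace, where coercivity plus continuity suffice and no weak-compactness argument is needed. (Alternatively one could invoke weak lower semicontinuity of the convex continuous $g$ together with weak compactness of bounded sets in $\calH$, but the finite-dimensional reduction is cleaner and avoids weak-topology machinery.)
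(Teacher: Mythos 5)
Your proof is correct, and its core coincides with the paper's: uniqueness comes from strict convexity of the squared Hilbert norm. The paper argues this concretely --- it supposes two distinct centers $c_0 \neq c_1$, forms the convex combination $c_\lambda = (1-\lambda)c_0 + \lambda c_1$, and uses $\|(1-\lambda)a + \lambda b\|_{\calH}^2 < (1-\lambda)\|a\|_{\calH}^2 + \lambda\|b\|_{\calH}^2$ (for $a \neq b$) to show $c_\lambda$ beats both, a contradiction --- which is exactly your strict-convexity argument unrolled by hand on the two supposed minimizers. Where you genuinely go beyond the paper is the existence half: the paper's proof (and the definition in the paper, which takes a minimum rather than an infimum) implicitly assumes the minimizing point exists, whereas you prove it, reducing to the finite-dimensional affine hull $\aff(p_1,\ldots,p_k)$ via the Pythagorean identity $g(x) = g(x_A) + \|x - x_A\|_{\calH}^2$ and then invoking coercivity plus continuity. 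This matters precisely because $\calH$ is infinite-dimensional in the intended application (the lifted RKHS of the Gaussian kernel), so attainment of the minimum is not automatic; your packaging of uniqueness (pointwise max of affine functions plus the strictly convex $\|x\|_{\calH}^2$) is also slightly cleaner and makes the ``at most one minimizer'' conclusion immediate. In short: same key idea for uniqueness, but your proof is more complete and self-contained, at the cost of being longer than the paper's contradiction argument.
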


        \begin{proof}[Proof of Lemma~\ref{l:uniq-cent-rad}]
             Suppose there exist distinct centers $c_0\neq c_1$. Let $r$ denote $\rad(\widehat{\sigma})$. By the definition of center, we have 
             \[ \exists \widehat{p}_0 \in \widehat{X}:\;\; \forall \widehat{p}_i \in \widehat{X}:\;\; D(c_0,\widehat{p}_i) \leq D(c_0,\widehat{p}_0) = \|c_0-p_0\|_{\calH}^2 - w(p_0) = r^2.\]
             \[ \exists \widehat{p}_1 \in \widehat{X}:\;\; \forall \widehat{p}_i \in \widehat{X}:\;\; D(c_1,\widehat{p}_i) \leq D(c_1,\widehat{p}_1) = \|c_1-p_1\|_{\calH}^2 - w(p_1) = r^2.\]
         For any $\lambda\in (0,1)$, define 
         \[D_{\lambda}(\widehat{p}_i):= (1-\lambda)D(c_0,\widehat{p}_i)+\lambda D(c_1,\widehat{p}_i),\] and 
         \[c_{\lambda} := (1-\lambda)c_0+\lambda c_1.\] 
         Then,
         \begin{eqnarray}
            D_{\lambda}(\widehat{p}_i) &=& (1-\lambda)D(c_0,\widehat{p}_i)+\lambda D(c_1,\widehat{p}_i) \notag \\
                                   &=& (1-\lambda)\|c_0-p_i\|_{\calH}^2-(1-\lambda)w(p_i)+\lambda \|c_1-p_i\|_{\calH}^2 -\lambda w(p_i) \notag\\
                                   &=& (1-\lambda)\|c_0-p_i\|_{\calH}^2+\lambda \|c_1-p_i\|_{\calH}^2 - w(p_i) \notag\\
                                   &>& \|(1-\lambda)c_0+\lambda c_1-p_i\|_{\calH}^2 - w(p_i) \label{eqn:conv-sq-norm}\\
                                   &=& D(c_{\lambda},p_i) \notag,
         \end{eqnarray}
         where step~\eqref{eqn:conv-sq-norm} followed from the fact that a distance function defined using a norm is a strictly convex function.
         The last line above contradicts our assumption that $c_0$ and $c_1$ are distinct centers of $\widehat{X}$, and so we get that $\widehat{X}$ can 
         have only one center.  
        \end{proof}

         Let $I$ be the set of indices of $p_j \in \hat{X}$, such that $\rad^2(\widehat{X}) = D(c,\widehat{p}_j)$, and let $\hat{X}_I$ be 
         the corresponding set of weighted points. The next lemma shows that the center $c$ can be expressed as a convex combination of the points of $\hat{X}_I$. 
         %The proof of this lemma is where we actually need the existence of geodesics. 
        \begin{lemma} 
        \label{l:cent-conv-sum} 
         There exists a collection of non-negative
         real numbers $(\lambda_i)_{i\in I}$ such that $\sum_{i\in I} \lambda_i = 1$ and $c = \sum_{i\in I} \lambda_i p_i$.
        \end{lemma}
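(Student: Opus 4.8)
The plan is to argue by contradiction, exploiting the minimality that defines the center. Write $r^2 = \rad^2(\widehat{X})$. By the definition of the active index set $I$ we have $D(c,\widehat{p}_i)=r^2$ for every $i\in I$ and $D(c,\widehat{p}_j)<r^2$ strictly for every $j\notin I$. Since $w(c)=0$ for the unweighted center, each active equation reads $D(c,\widehat{p}_i)=\|c-p_i\|_{\calH}^2-w(p_i)=r^2$, so the weights enter only as additive constants and the variation of the power distance in $c$ is governed entirely by the squared-norm terms $\|c-p_i\|_{\calH}^2$. I will assume $c\notin\conv\{p_i:i\in I\}$ and produce a direction along which the maximal power distance strictly decreases, contradicting optimality.

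The key step is to extract a \emph{descent direction}. The assumption $c\notin\conv\{p_i:i\in I\}$ is equivalent to $0\notin\conv\{p_i-c:i\in I\}$. Since this latter set is the continuous image of the compact probability simplex under $(\lambda_i)\mapsto\sum_{i\in I}\lambda_i(p_i-c)$, it is compact, hence closed; as it is convex and avoids the origin, strong separation in the Hilbert space $\calH$ yields a unit vector $u\in\calH$ with $\langle p_i-c,\,u\rangle_{\calH}>0$ for all $i\in I$. Moving the center along $u$ then strictly reduces every active power distance: expanding with $\|u\|_{\calH}=1$,
\[
 D(c+tu,\widehat{p}_i)=\|c-p_i\|_{\calH}^2+2t\langle c-p_i,u\rangle_{\calH}+t^2-w(p_i)
 = r^2-2t\langle p_i-c,u\rangle_{\calH}+t^2,
\]
which is $<r^2$ for all sufficiently small $t>0$, precisely because $\langle p_i-c,u\rangle_{\calH}>0$.

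To close the argument I would handle the inactive constraints by continuity. For each $j\notin I$ we have $D(c,\widehat{p}_j)<r^2$ strictly, and $t\mapsto D(c+tu,\widehat{p}_j)$ is continuous, so there is a common $t_0>0$ with $D(c+tu,\widehat{p}_j)<r^2$ for all $0<t<t_0$ and all $j\notin I$ simultaneously (only finitely many strict inequalities to preserve). Combining this with the displayed estimate on the active indices, for such $t$ we obtain $\max_{\widehat{p}_i\in\widehat{X}}D(c+tu,\widehat{p}_i)<r^2$, which contradicts the fact that $c$ attains the minimum of this maximum (the optimality underlying Lemma~\ref{l:uniq-cent-rad}). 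Hence $c\in\conv\{p_i:i\in I\}$, i.e. $c=\sum_{i\in I}\lambda_i p_i$ with $\lambda_i\ge 0$ and $\sum_{i\in I}\lambda_i=1$, as claimed.

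I expect the only genuinely delicate point to be the existence of the separating vector $u$ in the infinite-dimensional setting. The clean way around this is to invoke separation for the \emph{finite} convex hull $\conv\{p_i-c:i\in I\}$ rather than for a general convex body, using that this hull is compact (hence closed) so that strict separation of the exterior point $0$ is guaranteed. Once the descent direction is in hand, the remaining first-order perturbation bookkeeping and the uniform small-$t$ preservation of the inactive strict inequalities are routine.
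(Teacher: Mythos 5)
Your proof is correct and is essentially the same argument as the paper's: assume $c\notin\conv\{p_i : i\in I\}$, invoke the Hilbert projection theorem (your strong separation of the origin from the compact hull $\conv\{p_i-c : i\in I\}$ produces exactly the direction from $c$ toward its projection $c'$ onto the hull, which is the perturbation the paper uses), and show that a small move of $c$ in that direction strictly decreases all active power distances while the finitely many inactive strict inequalities persist by continuity, contradicting the optimality of $c$. The only difference is bookkeeping --- the paper bounds $\|\tilde c - p_i\|_{\calH}$ via convexity of the norm along the segment $[c,c']$, whereas you expand the squared norm to first order in $t$, which in fact makes the strict decrease more explicit than the paper's write-up.
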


        \begin{proof}[Proof of Lemma~\ref{l:cent-conv-sum}]
         Suppose $c\not\in \conv(X_I)$. By the Hilbert Projection Theorem~\cite{10.5555/26851}, there exists a unique $c'\neq c$ such that $c' = \arg\inf_{x\in \conv(X_I)} \|c-x\|$.
         Let $\tilde{c} = \lambda c+(1-\lambda)c'$, for some $\lambda\in [0,1]$. Then for any $p_i$, $i\in I$, the distance $\|\tilde{c}-p_i\|$ satisfies  
         \begin{eqnarray*}
            \|\tilde{c}-p_i\|_{\calH} &=& \|\lambda c+(1-\lambda)c'-p_i\|_{\calH} \\
                              &=& \|\lambda (c-p_i)+(1-\lambda)(c'-p_i) \|_{\calH} \\
                              &\leq& \lambda \|c-p_i\|_{\calH}+(1-\lambda)\|c'-p_i \|_{\calH} \;\;\leq\;\; \|c-p_i\|_{\calH},
         \end{eqnarray*}
         where the last step follows since $\|c'-p_i\|_{\calH}\leq \|c-p_i\|_{\calH}$ for any $i\in I$.
         By squaring the distances and then subtracting the weight of $p_i$ from both sides, we get that for all $i\in I$,  
         \[ D(\tilde{c},\hat{p}_i) \leq D(c,\hat{p}_i).\]
         For $\lambda$ sufficiently close to $1$, $\tilde{c}$ remains closer to the weighted points $\hat{p}_k$, $k\not\in I$ than to $\hat{p}_i$, $i\in I$.
         Thus we get 
         \[ D(\tilde{c},\hat{p}_k)<D(\tilde{c},\hat{p}_i)<D(c,\hat{p}_i) = \rad^2(X_I),\]
         which contradicts the fact that $c$ is the center of $X_I$.

        \end{proof}

        The next lemma shows that a convex combination of the squared distances of the points of $\hat{X}_I$ from the center $c$ can be  
        expressed in terms of a linear combination of pairwise distances between the points of $\hat{X}_I$. The proof of this lemma is 
        where we require the assumption that the points lie in a Hilbert space $\calH$ with norm given by an inner product. 

        \begin{lemma}
        \label{l:var-sq-ex-genl}
          Let $c = \sum_{i\in I} \lambda_i p_i$, where $\lambda_i\geq 0$ for $i\in I$, and $\sum_{i\in I}\lambda_i =1$. Then 
          \begin{eqnarray}
             \sum_{i\in I} \lambda_i \|c-p_i\|_{\calH}^2 &=& \frac{1}{2}\sum_{i,j\in I} \lambda_i\lambda_j \|p_i-p_j\|_{\calH}^2.
             %\sum_{i\in I} \lambda_i g(c,p_i)^2 &=& \frac{1}{2}\sum_{i,j\in I} \lambda_i\lambda_j g(p_i,p_j)^2.
          \end{eqnarray}
        \end{lemma}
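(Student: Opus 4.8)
The plan is to prove the identity by expanding both sides in terms of the inner product $\langle\cdot,\cdot\rangle_{\calH}$ inducing $\|\cdot\|_{\calH}$, and then exploiting the two hypotheses $c = \sum_{i\in I}\lambda_i p_i$ and $\sum_{i\in I}\lambda_i = 1$ to collapse the resulting sums. This is a purely algebraic identity; the substantive content — and the only place the ambient structure enters — is that $\|\cdot\|_{\calH}^2$ arises from a genuine bilinear inner product, so that $\|u-v\|_{\calH}^2 = \|u\|_{\calH}^2 - 2\langle u,v\rangle_{\calH} + \|v\|_{\calH}^2$ and $\langle\cdot,\cdot\rangle_{\calH}$ is bilinear.

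First I would rewrite the left-hand side. Expanding $\|c-p_i\|_{\calH}^2 = \|c\|_{\calH}^2 - 2\langle c,p_i\rangle_{\calH} + \|p_i\|_{\calH}^2$ and summing against $\lambda_i$ produces three terms. Using $\sum_{i\in I}\lambda_i = 1$, the first term contributes $\|c\|_{\calH}^2$; pulling $\lambda_i$ inside the inner product in the second term and invoking $c = \sum_{i\in I}\lambda_i p_i$ gives $-2\langle c,c\rangle_{\calH} = -2\|c\|_{\calH}^2$; the third term is $\sum_{i\in I} \lambda_i\|p_i\|_{\calH}^2$. Hence the left-hand side equals $\sum_{i\in I}\lambda_i\|p_i\|_{\calH}^2 - \|c\|_{\calH}^2$.

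Next I would treat the right-hand side in the same way. Expanding $\|p_i-p_j\|_{\calH}^2$ inside the double sum yields $\frac12\sum_{i,j\in I}\lambda_i\lambda_j\bigl(\|p_i\|_{\calH}^2 + \|p_j\|_{\calH}^2 - 2\langle p_i,p_j\rangle_{\calH}\bigr)$. By symmetry in $i,j$ and using $\sum_{j\in I}\lambda_j=1$, each of the two squared-norm sums collapses to $\sum_{i\in I}\lambda_i\|p_i\|_{\calH}^2$, while the cross term factors by bilinearity as $\langle\sum_{i\in I}\lambda_i p_i,\, \sum_{j\in I}\lambda_j p_j\rangle_{\calH} = \langle c,c\rangle_{\calH} = \|c\|_{\calH}^2$. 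Thus the right-hand side is also $\sum_{i\in I}\lambda_i\|p_i\|_{\calH}^2 - \|c\|_{\calH}^2$, which matches the left-hand side and completes the argument.

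There is no real obstacle here beyond careful bookkeeping: one must apply $\sum\lambda_i=1$ to the single sums, apply $c=\sum\lambda_i p_i$ to the cross terms, and track the factor $\frac12$ on the right. The only genuine hypothesis used is that $\calH$ is a Hilbert space — that $\|\cdot\|_{\calH}$ comes from a bilinear inner product — since the bilinearity step that turns the cross term into $\|c\|_{\calH}^2$ is unavailable for a general norm; this is precisely the point flagged in the remark preceding the statement.
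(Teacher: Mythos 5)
Your proof is correct: both sides indeed collapse to $\sum_{i\in I}\lambda_i\|p_i\|_{\calH}^2 - \|c\|_{\calH}^2$, and the bookkeeping with $\sum_i\lambda_i=1$ and $c=\sum_i\lambda_i p_i$ is handled properly. However, your route differs from the paper's in presentation: the paper explicitly sets aside the ``basic linear algebra'' argument (which is exactly what you wrote) and instead phrases the proof probabilistically. It introduces two independent random points $X_1,X_2$, each equal to $p_i$ with probability $\lambda_i$, and derives the identity as the classical fact that the expected squared distance between two i.i.d.\ samples is twice the variance, $\Ex{\|X_1-X_2\|_{\calH}^2} = 2\,\Ex{\|X_1-\Ex{X_1}\|_{\calH}^2}$, with $\Ex{X_1}=c$. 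The two arguments are mathematically the same computation in different languages --- your intermediate quantity $\sum_i\lambda_i\|p_i\|_{\calH}^2-\|c\|_{\calH}^2$ is precisely the variance term in the paper's proof --- but your version makes explicit, index by index, where bilinearity of the inner product enters (the cross term factoring as $\langle c,c\rangle_{\calH}$), whereas the paper's version hides the index manipulation inside standard expectation identities and is arguably more memorable and less error-prone. One minor caveat: you should say the inner product is \emph{real} bilinear (or symmetric), since for a complex Hilbert space the form is sesquilinear; in the paper's setting (the lift of the Gaussian kernel) the space is real, so nothing breaks, but the remark is worth making if you want your argument to be stated at the level of generality you claim.
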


        \begin{proof}[Proof of Lemma~\ref{l:var-sq-ex-genl}] 
          While the Lemma can be proved using basic linear algebra, we shall state the proof in probabilistic language, 
          which, in our opinion, makes it simpler and more intuitive.
          Let $X_1,X_2$ be two independently random points chosen from $\{p_i\;|\; i\in I\}$, with the point $p_i$ 
          being chosen with probability $\lambda_i$, for each $i\in I$. 
          By definition, we have $\|X_1-X_2\|_{\calH}^2 = \langle X_1-X_2,X_1-X_2\rangle$.
          Using the linearity of expectation we get 
          %and applying the definition of the variance of a random variable, we get 
          \begin{eqnarray*}
              \Ex{\|X_1-X_2\|_{\calH}^2} &=& \Ex{\langle X_1-X_2,X_1-X_2\rangle} \\
                                &=& \Ex{\langle X_1,X_1\rangle} + \Ex{\langle X_2,X_2\rangle} - \Ex{\langle X_2,X_1\rangle} - \Ex{\langle X_1,X_2\rangle} \\
                                %&=& 2\Ex{\langle X_1,X_1\rangle} -\Ex{\langle X_1,X_2\rangle}- \Ex{\langle X_2,X_1\rangle} \\
                                &=& 2\pth{\Ex{\langle X_1,X_1\rangle} -\Ex{\langle X_2,X_1\rangle}}. %\\
                                %&=& \Ex{\langle X_1-X_2,X_1-X_2\rangle} + \Var(\langle X_1-X_2,X_1-X_2\rangle)
          \end{eqnarray*}
          %where in the last equality we exchanged the order of the arguments in the inner product, using the fact that $X_1$ and $X_2$ are independent 
          %random variables having the same distribution. 
          In the above, the last step followed from the fact that $X_1$ and $X_2$ are independent and equidistributed random variables. 
          %(Since $X_1$ and $X_2$ have the same distribution, we have $\Ex{\langle X_1,X_1\rangle} = \Ex{\langle X_2,X_2\rangle}$.
          %Similarly we get $\Ex{\langle X_1,X_2\rangle} = \Ex{\langle X_2,X_1\rangle}$. \footnote{This holds even when the inner product is not commutative}) 
          The term $\Ex{\langle X_2,X_1\rangle}$ can be evaluated by first taking the expectation over 
          $X_2$, using the linearity of the inner product over the first variable, and then taking expectation over $X_1$. Thus we get 
          \[ \Ex{\langle X_2,X_1\rangle} = \Ex{\langle \Ex{X_2},X_1\rangle} = \Ex{\langle \Ex{X_1},X_1\rangle} = \langle \Ex{X_1},\Ex{X_1}\rangle.\]
          Therefore, the expression for $\Ex{\|X_1-X_2\|_{\calH}^2}$ can be written as 
          \begin{eqnarray*}
              \Ex{\|X_1-X_2\|_{\calH}^2} &=& 2\pth{\Ex{\langle X_1,X_1\rangle} -\langle \Ex{X_1},\Ex{X_1}\rangle} \\
                                &=& 2\pth{\Ex{\langle X_1-\Ex{X_1},X_1-\Ex{X_1}\rangle} } \;\;=\;\; 2\Ex{\|X_1-\Ex{X_1}\|_{\calH}^2}. 
          \end{eqnarray*}
          Evaluating the expectations, we get $\Ex{X_1} = \sum_{i\in I} \lambda_i p_i = c$, so that  
          \[\Ex{\|X_1-X_2\|_{\calH}^2} = \sum_{i,j\in I}\lambda_i\lambda_j \|p_i-p_j\|_{\calH}^2,\] 
          and 
          \[\Ex{\|X_1-\Ex{X_1}\|_{\calH}^2} = \sum_{i\in I}\lambda_i \|p_i-c\|_{\calH}^2.\] 
          Thus, the final expression becomes
          \begin{eqnarray*}
              \sum_{i,j\in I}\lambda_i\lambda_j \|p_i-p_j\|_{\calH}^2 &=& \Ex{\|X_1-X_2\|_{\calH}^2} \\ 
                                                             &=& 2\Ex{\|X_1-\Ex{X_1}\|_{\calH}^2}  
                                                             \;\;=\;\; 2 \sum_{i\in I}\lambda_i \|c-p_i\|_{\calH}^2.
          \end{eqnarray*}
        \end{proof} 

        %Now we arrive at the central lemma needed to prove the Simplex Distortion Lemma~\ref{l:rad-eps-distortion}. 
        The following Decomposition Lemma shows that the squared radius of the minimum enclosing 
        ball of the weighted point set $\widehat{X}$ can be expressed as a combination of pairwise power distances of the points in 
        $\widehat{X}$.

        \begin{lemma}[Decomposition Lemma] 
        \label{l:rad-conv-sum} 
          Let $I$ be the set of indices as defined in Lemma~\ref{l:cent-conv-sum}. Then 
          \[ \rad^2(\widehat{X}) = \frac{1}{2}\sum_{i\in I}\sum_{j\in I} \lambda_i\lambda_j D(\widehat{p}_i,\widehat{p}_j).\]
        \end{lemma}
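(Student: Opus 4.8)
The plan is to combine the two preceding lemmas with a direct expansion of the power distance, using only the defining property of the index set $I$ together with the normalization $\sum_{i\in I}\lambda_i=1$. First I would record the key consequence of the definition of $I$: for every $i\in I$ the center $c$ realizes the radius, and since $c$ is unweighted ($w(c)=0$),
$$\rad^2(\widehat{X}) = D(c,\widehat{p}_i) = \|c-p_i\|_{\calH}^2 - w(p_i).$$
Multiplying this identity by $\lambda_i$, summing over $i\in I$, and using $\sum_{i\in I}\lambda_i=1$ on the left, I obtain
$$\rad^2(\widehat{X}) = \sum_{i\in I}\lambda_i\|c-p_i\|_{\calH}^2 - \sum_{i\in I}\lambda_i w(p_i).$$

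Next I would invoke Lemma~\ref{l:var-sq-ex-genl}, with $c=\sum_{i\in I}\lambda_i p_i$ as furnished by Lemma~\ref{l:cent-conv-sum}, to rewrite the first sum purely in terms of pairwise squared norms, giving
$$\rad^2(\widehat{X}) = \frac{1}{2}\sum_{i\in I}\sum_{j\in I}\lambda_i\lambda_j\|p_i-p_j\|_{\calH}^2 - \sum_{i\in I}\lambda_i w(p_i).$$
It then remains to check that the claimed right-hand side equals this expression. Expanding the power distance via $D(\widehat{p}_i,\widehat{p}_j)=\|p_i-p_j\|_{\calH}^2-w(p_i)-w(p_j)$ yields
$$\frac{1}{2}\sum_{i,j\in I}\lambda_i\lambda_j D(\widehat{p}_i,\widehat{p}_j) = \frac{1}{2}\sum_{i,j\in I}\lambda_i\lambda_j\|p_i-p_j\|_{\calH}^2 - \frac{1}{2}\sum_{i,j\in I}\lambda_i\lambda_j\bigl(w(p_i)+w(p_j)\bigr).$$

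Finally I would collapse the weight double-sum using the normalization $\sum_{j\in I}\lambda_j=1$: each of the two sums $\sum_{i,j}\lambda_i\lambda_j w(p_i)$ and $\sum_{i,j}\lambda_i\lambda_j w(p_j)$ reduces to $\sum_{i\in I}\lambda_i w(p_i)$, so their average is again $\sum_{i\in I}\lambda_i w(p_i)$, which is exactly the weight term obtained above. The two expressions therefore coincide, proving the lemma. The computation is essentially bookkeeping: the only genuinely structural input is the reduction of $\sum_{i\in I}\lambda_i\|c-p_i\|_{\calH}^2$ to pairwise norms, which is precisely Lemma~\ref{l:var-sq-ex-genl} and where the Hilbert-space (inner-product) hypothesis enters; the weight cancellation, by contrast, needs only the convexity normalization guaranteed by Lemma~\ref{l:cent-conv-sum}. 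I therefore expect no serious obstacle beyond keeping these two hypotheses in force and tracking the weight indices carefully.
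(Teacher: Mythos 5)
Your proof is correct and follows essentially the same route as the paper's: both start from the identity $\rad^2(\widehat{X}) = \sum_{i\in I}\lambda_i\bigl(\|c-p_i\|_{\calH}^2 - w(p_i)\bigr)$ coming from the definition of $I$, apply Lemma~\ref{l:var-sq-ex-genl} to convert the center-distance sum into pairwise squared norms, and then use the normalization $\sum_{i\in I}\lambda_i=1$ to symmetrize the weight terms into the double sum. The only difference is cosmetic: the paper substitutes forward from the left-hand side while you expand the claimed right-hand side and match, but the algebra is identical.
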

 
        \begin{proof}[Proof of Lemma~\ref{l:rad-conv-sum}]
          For all $i\in I$, we have 
          \begin{eqnarray}
             \rad^2(\widehat{X}) &=& \sum_{i\in I}\lambda_i (\|c-p_i\|_{\calH}^2 -w(p_i)),\label{eqn:rad-conv-comb-wtd-dist-cent}
          \end{eqnarray}
          so that by Lemma~\ref{l:var-sq-ex-genl}, we have
          \begin{eqnarray*}
             \sum_{i\in I} \lambda_i \|c-p_i\|_{\calH}^2 &=& \frac{1}{2}\sum_{i,j\in I} \lambda_i\lambda_j \|p_i-p_j\|_{\calH}^2.
          \end{eqnarray*}
          Substituting in equation~\eqref{eqn:rad-conv-comb-wtd-dist-cent}, 
          \begin{eqnarray*}
              \rad^2(\widehat{X}) &=& \frac{1}{2}\sum_{i,j\in I}\lambda_i\lambda_j \|p_i-p_j\|_{\calH}^2 -\frac{1}{2}\sum_{i\in I} 2\lambda_i w(p_i) 
                              \;\;=\;\; \frac{1}{2}\sum_{i,j\in I}\lambda_i\lambda_j \|p_i-p_j\|_{\calH}^2 -\frac{1}{2}\sum_{i,j\in I} 2\lambda_i\lambda_j w(p_i) \\
                              &=& \frac{1}{2}\sum_{i,j\in I}\lambda_i\lambda_j \|p_i-p_j\|_{\calH}^2 -\frac{1}{2}\sum_{i,j\in I} \lambda_i\lambda_j (w(p_i)+w(p_j)) \\
                              &=& \frac{1}{2}\sum_{i,j\in I}\lambda_i\lambda_j \pth{\|p_i-p_j\|_{\calH}^2 - w(p_i)-w(p_j)} 
                              \;\;=\;\; \frac{1}{2}\sum_{i,j\in I}\lambda_i\lambda_j D(\widehat{p}_i,\widehat{p}_j). 
          \end{eqnarray*}
        \end{proof}

        To end this section, we mention the following interesting corollaries for the GKPD. Consider a 
        set of points $p_1,\ldots,p_k \in \R^D$ weighted using the GKPD as defined in~\eqref{eqn:defn-wt-func}, and 
        let $\widehat{\sigma}$ denote the associated abstract simplex formed by $\{\widehat{p}_1,\ldots,\widehat{p}_k\}$. 
        From Section~\ref{subsec:e-dist-maps-rad-simp-gkpd}, there exists a lifting map $\phi:\R^D\to \calH_K$ where $\calH_K$ is a Hilbert space, such 
        that the squared Gaussian kernel distance between any pair of points corresponds to the norm of the inner product of 
        ther difference of their position vectors in $\calH_K$. 
        By applying Lemma~\ref{l:uniq-cent-rad} and Lemma~\ref{l:rad-conv-sum} to the images of $\hat{p}_i$ $i\in [k]$ under the lifting map $\phi$ and using the corresponding 
        statements for the pairwise distances of the original points under the GKPD, we derive the following new properties of minimum enclosing balls under the GKPD: 
        \begin{corollary}
        \label{cor:new-props-gkpd}
        \begin{enumerate} 
           \item The center and radius of $\widehat{\sigma}$ are unique.
           \item There exists a set of non-negative reals $(\lambda_i)_{i\in [k]}$, such that $\sum_{i\in [k]}\lambda_i=1$, and 
                 %the squared radius $\rad^2(\widehat{\sigma})$ is such that 
          \[ \rad^2(\widehat{\sigma}) = \frac{1}{2}\sum_{i\in [k]}\sum_{j\in [k]} \lambda_i\lambda_j D(\widehat{p}_i,\widehat{p}_j).\]
        \end{enumerate} 
        \end{corollary}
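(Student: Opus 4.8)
The plan is to prove the corollary by transporting the minimum‑enclosing‑ball problem for $\widehat{\sigma}$ into the reproducing‑kernel Hilbert space $\calH_K$ via the lifting map $\phi$, and then quoting the three lemmas of this section verbatim. First I would recall the kernel identity established in Section~\ref{sec:kern-rff}, namely $\|\phi(x)-\phi(y)\|_{\calH_K}^2 = 2(1-K(x,y)) = D_K^2(x,y)$ for all $x,y\in\R^D$. Associating to each weighted input $\widehat{p}_i=(p_i,w(p_i))$ the lifted weighted point $(\phi(p_i),w(p_i))\in\calH_K$ with the \emph{same} weight, the power distance in $\calH_K$ between two lifted points becomes
\[
\|\phi(p_i)-\phi(p_j)\|_{\calH_K}^2 - w(p_i)-w(p_j) \;=\; D_K^2(p_i,p_j)-w(p_i)-w(p_j) \;=\; D(\widehat{p}_i,\widehat{p}_j),
\]
so the full table of pairwise power distances is preserved exactly. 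Consequently the minimum‑enclosing‑power‑ball problem for $\widehat{\sigma}$ under the GKPD is identical to the Hilbert‑space problem of Section~\ref{sec:mepb} applied to the lifted family in $\calH_K$.

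With this correspondence in hand, part~(1) is immediate from Lemma~\ref{l:uniq-cent-rad}, applied with $\calH_K$ playing the role of $\calH$: it guarantees a unique center and radius for any weighted point set in a Hilbert space. For part~(2), I would apply Lemma~\ref{l:cent-conv-sum} to the lifted points to obtain an active index set $I\subseteq[k]$ and non-negative coefficients $(\lambda_i)_{i\in I}$ with $\sum_{i\in I}\lambda_i=1$ and center $\sum_{i\in I}\lambda_i\phi(p_i)$, and then invoke the Decomposition Lemma (Lemma~\ref{l:rad-conv-sum}), which gives $\rad^2(\widehat{\sigma}) = \tfrac12\sum_{i,j\in I}\lambda_i\lambda_j D(\widehat{p}_i,\widehat{p}_j)$ after substituting the identity of pairwise power distances displayed above. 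Extending the coefficient vector by setting $\lambda_i=0$ for $i\in[k]\setminus I$ leaves $\sum_i\lambda_i=1$ unchanged and leaves the double sum unchanged (every new term carries a zero factor), which is precisely the stated formula summed over all of $[k]$.

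The one point that genuinely needs care — and which I regard as the main obstacle — is justifying that $\rad^2(\widehat{\sigma})$ is the Hilbert‑space radius of the lifted simplex, rather than an infimum taken over $\R^D$ alone. Since $K(x,x)=1$, the image $\phi(\R^D)$ lies on the unit sphere of $\calH_K$, whereas the optimal center produced by Lemma~\ref{l:cent-conv-sum} is a convex combination of the $\phi(p_i)$ and hence typically lies strictly inside that sphere; so a minimization restricted to $\phi(\R^D)$ could in principle give a strictly larger value than the minimization over all of $\calH_K$. The resolution is that the GKPD‑based persistence is computed for the lifted point cloud, so the relevant radius is by definition the one obtained after lifting to $\calH_K$ (the setting in which Lemmas~\ref{l:uniq-cent-rad}--\ref{l:rad-conv-sum} were proved, and the one used throughout Section~\ref{subsec:e-dist-maps-rad-simp-gkpd}). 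Once this identification is fixed, the three lemmas apply to the lifted family with no change, and everything reduces to the formal substitution described above.
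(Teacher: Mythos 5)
Your proposal is correct and follows essentially the same route as the paper: lift the weighted points to $\calH_K$ via $\phi$, observe that all pairwise power distances are preserved exactly, and then invoke Lemma~\ref{l:uniq-cent-rad} for part (1) and Lemmas~\ref{l:cent-conv-sum} and~\ref{l:rad-conv-sum} (with coefficients extended by zero from $I$ to $[k]$) for part (2). Your explicit discussion of the domain of the minimization defining $\rad^2(\widehat{\sigma})$ --- over all of $\calH_K$ rather than over the unit-sphere image $\phi(\R^D)$ --- flags a subtlety the paper leaves implicit, and your resolution coincides with the convention the paper uses throughout Section~\ref{subsec:e-dist-maps-rad-simp-gkpd}.
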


\section{Low-Distortion Maps for Power Distances}
\label{sec:low-dist-map-pd}
        %The main result of this section is 
        In this section, we shall look at low-distortion mappings of power distances.
        First we need the notion of an \emph{$\e$-distortion} map for power distances between metric spaces.
        \begin{definition}
            Given metric spaces $(X,d_X)$ and $(Y,d_Y)$, 
        a point set $P\subset X$ and $\e\in (0,1)$, %and a distance function $g:\calM\times \calM\to \R_+$, 
        a mapping $f:X\to Y$ is an \emph{$\e$-distortion map} with respect to pairwise distances in $P$ if, 
        \[  \forall x,y\in P:\;\; (1-\e) d_X(x,y)^2\leq d_Y(f(x),f(y))^2 \leq (1+\e)d_X(x,y)^2.\]
        %\[ (1-\e)\|x-y\|^2 \leq \|f(x)-f(y)\|^2 \leq (1+\e)\|x-y\|^2.\]
        Further, given a pair of weight functions $w_X:P \to \R$ and $w_Y:P\to \R$, 
        $f$ is an $\e$-distortion map with respect to $w_X$ if, 
        \[ \forall x\in P:\;\;  |w_Y(f(x))-w_X(x)| \leq \e w_X(x) .\]
        Thus, given a power distance defined as $D_X(\hat{x},\hat{y}) = d_X(x,y)^2 -w_X(x)-w_X(y)$, where $x,y\in P$, and $\hat{x} = (x,w_X(x))$, $x\in P$, are 
        weighted points, the mapping $f:X\to Y$ is an \emph{$\e$-distortion mapping for the power distance} $D_X$ if $f$ is an $\e$-distortion 
        mapping for the pairwise distances as well as for the weight functions $w_X,w_Y$. 
        \end{definition}

        \subsection{Reproducing Kernel Hilbert Spaces and Dimensionality Reduction} 
        In the context of Gaussian kernels, perhaps the most well-known example of an $\e$-distortion map is the RFF 
        map of Rahimi and Recht~\cite{DBLP:conf/nips/RahimiR07}, which was shown 
        by Chen and Phillips~\cite{DBLP:conf/alt/ChenP17} to be an $\e$-distortion map for the Gaussian kernel distance. \\

        For points in $\R^D$, there exists a mapping to $\R^t$, with $t= O\pth{\e^{-2}\log n}$ that 
        gives a relative approximation of the kernel distance on $\R^D$, as the natural inner product on $\R^t$. 
        Their RFF mapping is given as follows:
        For $i=1,\ldots,d/2$, given $\sigma \geq 0$, let $\omega_i \sim \calN_D(0,\sigma^{-2})$ be 
        $D$-dimensional independent Gaussian random variables. For each $i$, define the random map $f_i:\R^D\to \R^2$, as 
        \[ f_i(x) = (\cos(\langle \omega_i,x\rangle),\sin(\langle\omega_i,x\rangle)).\] 
        Finally, define the mapping $f:\R^D\to \R^t$ as   
        \begin{eqnarray}
             f(x) &=& \bigotimes_{i=1}^{d/2} f_i(x). \label{eqn:def-rff-chph}
        \end{eqnarray}

        \begin{theorem}[Chen, Phillips~\cite{DBLP:conf/alt/ChenP17}] 
        \label{thm:chen-phillips-dimred}
        Given any $\e,\delta\in (0,1)$, for any set $P\subset \R^D$ 
        %\begin{enumerate} 
            %\item[(i)] of $n$ points, or 
            %\item[(ii)] an arbitrary number of points, such that for all $x,y\in P$, $\|x-y\|/\sigma \leq r$, where $r>0$ is a given parameter, 
        %\end{enumerate} the mapping %$\widehat{D}_K:\R^D\to \R^d$ given by 
        %\[ \widehat{D}_K(x,y) := \|f(x)-f(y)\|\] with 
            of $(i)$ $n$ points, or $(ii)$ an arbitrary number of points, such that for all $x,y\in P$, $\|x-y\|/\sigma \leq r$, where $r>0$ is a given parameter, 
        $f:\R^D\to \R^t$ defined as in eqn.~\eqref{eqn:def-rff-chph}, with $(i)$ $t:=  \Omega\pth{\e^{-2}\log (n/\delta)}$ dimensions, or 
        $(ii)$ $t:= \Omega\pth{\e^{-2}D\log(rD/\e\delta)}$ dimensions respectively, is an $\e$-distortion map for the Gaussian kernel distance, 
        i.e. $f$ satisfies $\frac{\|f(x)-f(y)\|^2}{D_K^2(x,y)} \in (1-\e,1+\e)$ 
        for all pairs of points $x,y\in P$, with probability at least $1-\delta$.
        \end{theorem}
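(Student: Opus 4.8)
The plan is to exploit the fact that the RFF construction turns the Gaussian kernel into the \emph{mean} of a bounded, easily concentrated random variable, via Bochner's theorem. The Gaussian kernel $K(x,y)=\exp(-\|x-y\|^2/2\sigma^2)$ is precisely the Fourier transform (characteristic function) of the frequency distribution $\calN_D(0,\sigma^{-2})$ from which each $\omega_i$ is drawn. Hence, writing $\theta_i=\langle\omega_i,x-y\rangle$ and using the elementary identity $\langle f_i(x),f_i(y)\rangle = \cos\theta_i$, one gets $\Ex{\cos\theta_i}=K(x,y)$. With the normalisation implicit in~\eqref{eqn:def-rff-chph}, the inner product $\langle f(x),f(y)\rangle$ equals the empirical average $\frac{2}{d}\sum_{i=1}^{d/2}\cos\theta_i$, and since $\|f_i(x)\|=1$ one has $\|f(x)-f(y)\|^2 = 2\pth{1-\langle f(x),f(y)\rangle}$, whose expectation is exactly $D_K^2(x,y)=2(1-K(x,y))$. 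So the squared image distance is an unbiased empirical mean of $m:=d/2$ i.i.d.\ terms, each supported in a bounded interval.

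The crux is to turn this unbiasedness into a \emph{relative} $(1\pm\e)$ guarantee, which amounts to showing $|\langle f(x),f(y)\rangle - K(x,y)|\le \e\,(1-K(x,y))$ for each relevant pair. The key computation I would carry out is the variance of a single summand. Using $\Ex{\cos 2\theta_i}$, which by Bochner equals the kernel evaluated at $2(x-y)$, i.e.\ $K(x,y)^4$, a short calculation gives $\mathrm{Var}(\cos\theta_i)=\tfrac12\pth{1-K(x,y)^2}^2$. The decisive point is that this variance scales like $(1-K)^2$, i.e.\ \emph{quadratically} in the quantity $1-K=D_K^2/2$ that sets the target accuracy. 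Feeding this variance into a Bernstein-type tail inequality, the variance contribution to the exponent is of order $m\,\e^2(1-K)^2/(1-K)^2=m\e^2$, so that $m=\Omega(\e^{-2}\log(1/\delta'))$ suffices to control one pair with failure probability $\delta'$. It is precisely this variance-to-mean ratio that upgrades an additive concentration to the multiplicative guarantee required by the definition of an $\e$-distortion map.

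For the two cases I would then close the argument as follows. In case~$(i)$ there are only $\binom{n}{2}$ pairs, so a union bound over them with $\delta'=\delta/\binom{n}{2}$ yields the claim with $t=\Omega(\e^{-2}\log(n/\delta))$. In case~$(ii)$ the point set may be infinite, so instead I would lay down a $\gamma$-net $N$ of the radius-$r$ ball (measured in units of $\sigma$); its cardinality is $|N|=(O(r/\gamma))^{D}$, and applying the per-pair bound with a union bound over $N\times N$ costs $t=\Omega(\e^{-2}(D\log(r/\gamma)+\log(1/\delta)))$. To pass from the net to arbitrary points I would use that $f$ is Lipschitz with a constant controlled by $\max_i\|\omega_i\|=O(\sqrt D/\sigma)$, which holds with high probability, choosing $\gamma$ of order $\e\sigma/\sqrt D$ (up to the Lipschitz constant) small enough that the transfer error stays within the relative budget; this produces the stated $t=\Omega(\e^{-2}D\log(rD/\e\delta))$.

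The part I expect to be most delicate is the relative guarantee for \emph{close} pairs, where $K\to 1$ and $1-K\to 0$. There the naive range bound on each summand (a constant) makes the linear Bernstein term $\sim \e(1-K)$ dominate the quadratic variance term, which would force $m$ to blow up as $1-K\to 0$. The resolution is that in this regime $\theta_i$ is itself small, so $\cos\theta_i\approx 1-\theta_i^2/2$ and the summand behaves like $\theta_i^2$ with $\theta_i\sim\calN(0,\|x-y\|^2/\sigma^2)$; the relevant concentration is then $\chi^2$-type (the RFF map degenerates to a scaled Gaussian projection), which again gives a clean $\e^{-2}$ bound uniformly. I would therefore either split the analysis into a small-distance ($\chi^2$/Johnson--Lindenstrauss) regime and a bounded-distance (Bernstein) regime, or invoke a single sharp sub-exponential tail bound that interpolates between them; in case~$(ii)$ the same closeness issue reappears in the net-to-point transfer and is absorbed by the Lipschitz estimate above.
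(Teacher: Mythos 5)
This theorem is not proved in the paper at all --- it is quoted as an external result of Chen and Phillips~[ALT 2017] --- so the only meaningful comparison is with the cited source, and your reconstruction follows essentially that proof: Bochner-transform unbiasedness of the RFF inner product, the variance identity $\mathrm{Var}(\cos\theta_i)=\tfrac{1}{2}\pth{1-K(x,y)^2}^2$ whose quadratic scaling in $1-K$ upgrades Bernstein-type concentration to the required relative $(1\pm\e)$ guarantee, a union bound over the $\binom{n}{2}$ pairs in case $(i)$, and a net-plus-Lipschitz argument in case $(ii)$. The close-pair difficulty you flag (the linear range term dominating the variance term as $1-K\to 0$) is genuine, and your proposed resolution --- exploiting $1-\cos\theta_i\leq \theta_i^2/2$ so that the summand is sub-exponential at scale comparable to its mean, i.e.\ a $\chi^2$/Johnson--Lindenstrauss regime for small distances --- is exactly how the cited work handles it, so I see no gap in your approach.
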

        
        \begin{remark} %As mentioned earlier in this section %Section~\ref{sec:low-dist-map-pd}, 
        For the Gaussian kernel function, there exists a lifting map $\phi$ from $\R^D$ to a Hilbert space $\calH_K$ such that the kernel
        function lifts to the inner product on $\calH_K$ (for any pair of points in $\R^D$).
        The Hilbert space $\calH_K$ corresponding to a reproducing kernel $K(.,.)$ 
        is in general infinite-dimensional. However,  using the RFF maps of Rahimi and 
        Recht~\cite{DBLP:conf/nips/RahimiR07}, Chen and Phillips~\cite{DBLP:conf/alt/ChenP17} 
        showed that the inner product on $\calH_K$ can be approximated by the Euclidean inner product on a finite-dimensional space. 
        \end{remark}

            From Remark~\ref{rem:gkpd}, we know that the Gaussian weight function needs to be recomputed in the image space. For each $p\in P$, let us define 
        \begin{eqnarray}
            w_{t}(p) := -\pth{\frac{1}{|P|}\sum_{y\in P}\|f(p)-f(y)\|^2- \frac{1}{2|P|^2}\sum_{x,y\in P}\|f(x)-f(y)\|^2}. \label{eqn:recomp-wts-rt}
        \end{eqnarray}
        Now comparing the definition of the Gaussian kernel weight function eq.~\eqref{eqn:defn-wt-func} with eq.~\eqref{eqn:recomp-wts-rt}, 
        and applying Theorem~\ref{thm:chen-phillips-dimred}, 
        it can be seen that $f$ is an $\e$-distortion map for the Gaussian kernel weight function as well, and therefore an $\e$-distortion map for the GKPD. 

        \begin{corollary}
        \label{cor:eps-distort-wts}
            The map $f$ of Theorem~\ref{thm:chen-phillips-dimred} is an $\e$-distortion map for the GKPD.
        \end{corollary}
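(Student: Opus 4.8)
The definition of an $\e$-distortion map for a power distance bundles two requirements -- one on pairwise distances and one on the weight function -- so I would establish the corollary by checking each in turn. The pairwise-distance requirement is nothing but the conclusion of Theorem~\ref{thm:chen-phillips-dimred}, which already guarantees $\|f(x)-f(y)\|^2\in(1\pm\e)\,D_K^2(x,y)$ for every $x,y\in P$. Hence the real content lies in the weight requirement: that the recomputed weights $w_t(p)$ of~\eqref{eqn:recomp-wts-rt} agree with the GKPD weights $w(p)$ of~\eqref{eqn:defn-wt-func} up to a $(1\pm\e)$ factor.

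My plan here is to use that $w$ and $w_t$ are evaluations of one and the same linear functional of the pairwise squared distances. Setting $A(p)=\tfrac{1}{|P|}\sum_{y\in P}D_K^2(p,y)$ and $B=\tfrac{1}{2|P|^2}\sum_{x,y\in P}D_K^2(x,y)$, equation~\eqref{eqn:defn-wt-func} reads $-w(p)=A(p)-B$, while~\eqref{eqn:recomp-wts-rt} reads $-w_t(p)=A'(p)-B'$, where $A'(p)$ and $B'$ are the identical sums formed from $\|f(\cdot)-f(\cdot)\|^2$. Since each summand of $A(p)$ and of $B$ is a single pairwise squared distance and all summands are nonnegative, applying Theorem~\ref{thm:chen-phillips-dimred} term by term yields the two-sided bounds $A'(p)\in(1\pm\e)A(p)$ and $B'\in(1\pm\e)B$ simultaneously. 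I would also record the geometric reading obtained by expanding the squared norms in $\calH_K$ and in $\R^t$ respectively: $-w(p)=D_K^2(\mu,p)=\|\phi(p)-\bar\phi\|_{\calH_K}^2$ and $-w_t(p)=\|f(p)-\bar f\|^2$, where $\bar\phi$ and $\bar f$ are the centroids (mean embeddings) of the point set and of its image. This makes explicit that each weight is a genuine squared distance to a centroid.

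The step I expect to be the main obstacle is combining the per-term bounds, because $-w(p)=A(p)-B$ is a \emph{difference} of nonnegative quantities rather than a nonnegative combination, so the factor-$(1\pm\e)$ distortions of $A(p)$ and $B$ do not propagate intact through the subtraction. Taking the extreme signs gives only the absolute estimate $|w_t(p)-w(p)|\le \e\,(A(p)+B)$, whereas the definition calls for the relative estimate $|w_t(p)-w(p)|\le \e\,|w(p)|=\e\,(A(p)-B)$; these match only when $B$ is small against $A(p)$, i.e. when $p$ lies far from the centroid relative to the spread of $P$. Reconciling the two is the crux: one must argue that $\|f(p)-\bar f\|^2$ is preserved \emph{multiplicatively} against $D_K^2(\mu,p)=\|\phi(p)-\bar\phi\|_{\calH_K}^2$, and not merely additively. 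This is precisely a statement about a measure-to-point distance $D_K(\mu,p)$, which (as flagged in the Introduction) is exactly the kind of quantity generic pairwise-preserving maps need not respect. I would therefore expect the proof to either invoke finer structure of the RFF map at this point, or else to carry the absolute error $\e(A(p)+B)$ forward and absorb it in the interleaving argument of Section~\ref{sec:main-result-proof}.
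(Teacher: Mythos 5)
Your plan coincides with what the paper actually does, but only with its first half: the paper's entire justification of Corollary~\ref{cor:eps-distort-wts} is the sentence preceding it, which compares eq.~\eqref{eqn:defn-wt-func} with eq.~\eqref{eqn:recomp-wts-rt} and asserts that, by applying Theorem~\ref{thm:chen-phillips-dimred} summand by summand, ``it can be seen'' that the weights are preserved to relative error $\e$. That is exactly your term-by-term step $A'(p)\in(1\pm\e)A(p)$, $B'\in(1\pm\e)B$ followed by subtraction. The obstruction you then identify --- that $-w(p)=A(p)-B$ is a \emph{difference}, so these bounds give only $|w_t(p)-w(p)|\le \e\,(A(p)+B)$ rather than the required $\e\,(A(p)-B)$ --- is not resolved, or even acknowledged, anywhere in the paper. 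So you are not missing a step of the paper's argument; you have correctly located a genuine gap in it.

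The gap is real and not a matter of constants. As you note, $-w(p)=D_K^2(\mu,p)=\|\phi(p)-\bar{\phi}\|_{\calH_K}^2$ with $\bar{\phi}=\frac{1}{n}\sum_{y\in P}\phi(y)$, and this can be of strictly smaller order than $A(p)+B$. Concretely, for the three collinear points $\{-\delta\sigma,\,0,\,\delta\sigma\}$ one computes $-w(0)=\frac{1}{9}\left(6-8e^{-\delta^2/2}+2e^{-2\delta^2}\right)=\Theta(\delta^4)$, while $A(0)+B=\Theta(\delta^2)$; a map that inflates the single distance $D_K^2(-\delta\sigma,0)$ by the allowed factor $(1+\e)$ and keeps all other pairwise distances exact (realizable as an embedding of a triangle in the plane) changes the recomputed weight of the middle point by $\frac{2\e}{9}D_K^2(-\delta\sigma,0)=\Theta(\e\delta^2)$, an unbounded multiple of $|w(0)|$ as $\delta\to 0$. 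Hence pairwise $\e$-distortion --- which is all that Theorem~\ref{thm:chen-phillips-dimred} supplies and all that the paper invokes --- cannot imply the corollary; this is precisely the tension with the Introduction's own admission that the RFF map is not known to preserve point-to-measure kernel distances, of which $w(p)$ is one. If the corollary is to be salvaged for the specific RFF map, finer structure must be used: up to the normalization of $f$, one has
\begin{equation*}
\|f(p)-\bar{f}\|^2 \;=\; \frac{2}{t}\sum_{l=1}^{t/2}\left|e^{i\langle\omega_l,p\rangle}-\frac{1}{n}\sum_{y\in P}e^{i\langle\omega_l,y\rangle}\right|^2,
\end{equation*}
an average of i.i.d.\ nonnegative terms whose common mean is exactly $\|\phi(p)-\bar{\phi}\|_{\calH_K}^2$, so a Bernstein-type \emph{relative} concentration bound in the spirit of Chen--Phillips is plausible --- but no such argument appears in the paper. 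Your fallback of carrying the additive error $\e(A(p)+B)$ into Section~\ref{sec:main-result-proof} also fails as stated: $B$ is not bounded by any constant multiple of $\rad^2(\widehat{\sigma})$ for clustered point sets, so this additive error cannot be absorbed into a multiplicative interleaving constant in Lemma~\ref{l:rad-eps-distortion}.
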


        \subsection{$\e$-Distortion Maps and Radii of Simplices under GKPD}
        \label{subsec:e-dist-maps-rad-simp-gkpd}
        Now we are ready to show that 
        applying an $\e$-distortion map to a point sample 
        in a high-dimensional space can change the 
        radii of the minimum enclosing balls of the simplices of the \Cech filtration at most by a $(1\pm\e)$-factor. 
        This is the Simplex Distortion Lemma, stated and proved below.
        % - says that %Under the assumptions of Theorem~\ref{thm:inner-prod-dimred-gauss-kern}, the map $f$ preserves pairwise power distances upto a $(1\pm\e)$ multiplicative factor. 

        %\begin{theorem}
        %\label{thm:inner-prod-dimred}
        %  Let $\hat{P}$ be a set of $n$ weighted points in a geodesic inner product space $\calM$ with a distance function $g(x,y) := \|x-y\|^2_{\calM}$, 
        %  $w:P\to \R^+$ be a weight function, and $G:\calM\to \R^d$ an 
        %  $(\e,\beta)$-distortion map for the distance as well as for the weight function. Then the PH of the weighted \Cech 
        %  filtration $\check{C}_{\alpha}(\widehat{P})$ computed using the power 
        %  distance $\widehat{D}(\widehat{x},\widehat{y}) = \|x-y\|_{\calM}^2 - w(\widehat{x})-w(\widehat{y})$ for all $x,y\in P$ satisfies 
        %  \[ \check{C}_{\alpha_-}(\widehat{P})\subseteq  \check{C}_{\alpha}(\widehat{G(P)})\subseteq \check{C}_{\alpha_+}(\widehat{P}).\]
        %  where $\alpha_- := \sqrt{(1-\e)^2\alpha^2-\beta}$ and $\alpha_+ := \sqrt{(1+\e)^2\alpha^2+\beta}$.
        %  %is $(1-\e)^{-1}$-preserved under 
        %  %the map $G$ with an additive error of $o(1)$.
        %\end{theorem}

        \begin{lemma}[Simplex Distortion Lemma]
        \label{l:rad-eps-distortion}
           Let $\widehat{\sigma} \subset \widehat{P}$ be a simplex in the weighted \Cech complex $\check{C}_{\alpha}(\widehat{P})$ using the Gaussian 
           kernel power distance $f_{\mu}^K:\R^D\to \R$, and let 
           $G:(\R^D,D_K)\to (\R^t,\|.\|)$ be an $\e$-distortion map for the pairwise power distances given by $D(\hat{p},\hat{q})= D_K^2(p,q)-w(p)-w(q)$, 
           $\hat{p},\hat{q}\in \widehat{P}$. %, with the weights being recomputed in the image space.
           %pairwise kernel distances $D_K:P\times P \to \R^d$, %$g:P\times P\to \R_+$, 
           %as well as the weights $w:P\to \R$. 
           Then  
         \[ (1-\e)\rad^2(\widehat{\sigma}) \leq \rad^2(\widehat{G(\sigma)}) \leq (1+\e)\rad^2(\widehat{\sigma}) ,\]
           where $\widehat{G(\sigma)}$ denotes the image of the simplex $\sigma$ in $\R^t$, with the  
           weights being recomputed in $\R^t$ as in~\eqref{eqn:recomp-wts-rt}.
        \end{lemma}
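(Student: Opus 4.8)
The plan is to reduce the preservation of the squared radius to the preservation of a single quadratic form in the pairwise power distances, for which the hypothesis on $G$ applies directly. The key object is, for any probability vector $\lambda=(\lambda_i)$ on the vertices of $\widehat\sigma$,
\[ R(\lambda) := \frac{1}{2}\sum_{i,j}\lambda_i\lambda_j D(\widehat{p}_i,\widehat{p}_j), \]
together with its barycenter $c_\lambda := \sum_i \lambda_i p_i$. First I would establish a variational characterization of the radius, namely $\rad^2(\widehat\sigma) = \max_\lambda R(\lambda)$, the maximum being over all probability vectors supported on the vertices. This holds in any Hilbert space (in particular in $\calH_K$ via the lifting map $\phi$, and in $\R^t$ with the Euclidean norm), so it applies both before and after the map.

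To prove this characterization, the engine is the identity
\[ \sum_i \lambda_i D(x,\widehat{p}_i) = R(\lambda) + \|x-c_\lambda\|_{\calH}^2, \]
valid for every $x$ and every probability vector $\lambda$. This follows by expanding $\|x-p_i\|_{\calH}^2$ about $c_\lambda$ (the cross term vanishes since $\sum_i\lambda_i(c_\lambda-p_i)=0$) and then applying Lemma~\ref{l:var-sq-ex-genl} to the arbitrary convex combination $c_\lambda$; this is exactly the step that uses the inner-product structure. Evaluating the identity at the true center $c$ of $\widehat\sigma$ and using $D(c,\widehat{p}_i)\le \rad^2(\widehat\sigma)$ for every $i$ gives $R(\lambda)\le \sum_i\lambda_i D(c,\widehat{p}_i)\le \rad^2(\widehat\sigma)$ for all $\lambda$; the Decomposition Lemma (Lemma~\ref{l:rad-conv-sum}, via Corollary~\ref{cor:new-props-gkpd}) shows this bound is attained at the optimal weights, yielding $\rad^2(\widehat\sigma)=\max_\lambda R(\lambda)$.

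With the characterization in hand, the distortion transfers termwise. Since every coefficient $\lambda_i\lambda_j$ is non-negative and $G$ is an $\e$-distortion map for the pairwise power distances, summing the per-pair bounds gives $(1-\e)R_{\mathrm{orig}}(\lambda)\le R_{\mathrm{img}}(\lambda)\le (1+\e)R_{\mathrm{orig}}(\lambda)$ for every $\lambda$, where $R_{\mathrm{img}}$ is the analogue computed with the recomputed weights of~\eqref{eqn:recomp-wts-rt} in $\R^t$. Let $\lambda^\ast$ be optimal for $\widehat\sigma$ and $\mu^\ast$ optimal for $\widehat{G(\sigma)}$. For the lower bound, $\rad^2(\widehat{G(\sigma)})=R_{\mathrm{img}}(\mu^\ast)\ge R_{\mathrm{img}}(\lambda^\ast)\ge (1-\e)R_{\mathrm{orig}}(\lambda^\ast)=(1-\e)\rad^2(\widehat\sigma)$; for the upper bound, $\rad^2(\widehat{G(\sigma)})=R_{\mathrm{img}}(\mu^\ast)\le (1+\e)R_{\mathrm{orig}}(\mu^\ast)\le (1+\e)\rad^2(\widehat\sigma)$, where the last step uses the one-sided inequality $R_{\mathrm{orig}}(\mu)\le\rad^2(\widehat\sigma)$ from the characterization.

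I expect the main obstacle to be the variational characterization itself, i.e.\ proving the one-sided inequality $R(\lambda)\le\rad^2(\widehat\sigma)$ for \emph{all} $\lambda$ rather than only at the optimum. This is precisely what lets me compare the two radii even though the optimal support and weights ($\lambda^\ast$ versus $\mu^\ast$) change under the map; without it, the Decomposition Lemma alone only relates quantities computed with matched weights. A secondary point to handle carefully is the sign convention in the distortion hypothesis when a pairwise power distance is negative, so that the termwise $(1\pm\e)$ bounds combine correctly under the non-negative weights $\lambda_i\lambda_j$.
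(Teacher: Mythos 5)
Your proof is correct, and while it shares the paper's skeleton---the Decomposition Lemma~\ref{l:rad-conv-sum} applied in both spaces, termwise transfer of the $(1\pm\e)$ bounds across the non-negative coefficients, then a comparison involving the two optimal weight vectors $\lambda^\ast$ and $\mu^\ast$---it departs from the paper at exactly the cross-comparison step, and your version is the correct one. The paper justifies its cross-comparisons by ``minimality in the definition of the squared radius'': equation~\eqref{eqn:rad-mus} asserts $R_{\mathrm{orig}}(\lambda^\ast)\le R_{\mathrm{orig}}(\mu^\ast)$, and the first inequality of~\eqref{eqn:radg-lambdas} asserts $R_{\mathrm{img}}(\mu^\ast)\le R_{\mathrm{img}}(\lambda^\ast)$. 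Both point the wrong way: as your variational characterization $\rad^2(\widehat\sigma)=\max_\lambda R(\lambda)$ shows, the quadratic form is \emph{maximized} (not minimized) at the simplex's own optimal weights, so in general $R_{\mathrm{orig}}(\mu^\ast)\le R_{\mathrm{orig}}(\lambda^\ast)$ and $R_{\mathrm{img}}(\lambda^\ast)\le R_{\mathrm{img}}(\mu^\ast)$, with strict inequality already for an equilateral triangle whose image under $G$ is slightly non-equilateral. Your one-sided bound $R(\lambda)\le\rad^2(\widehat\sigma)$ for \emph{every} probability vector $\lambda$---proved via the parallel-axis identity $\sum_i\lambda_i D(x,\widehat p_i)=R(\lambda)+\|x-c_\lambda\|_{\calH}^2$ evaluated at the true center, with attainment supplied by Lemma~\ref{l:rad-conv-sum}---is precisely the ingredient the paper is missing; with it, the lower bound follows from maximality of $\mu^\ast$ in the image space and the upper bound from $R_{\mathrm{orig}}(\mu^\ast)\le\rad^2(\widehat\sigma)$ in the source space, exactly as you argue. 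So your proposal does not merely reprove the lemma, it repairs the paper's own proof. One point you should still write out explicitly is the sign issue you flag at the end: the hypothesis gives separate $(1\pm\e)$ bounds on squared kernel distances and on weights, and these combine into termwise $(1\pm\e)$ bounds on the pairwise power distances because the GKPD weights in~\eqref{eqn:defn-wt-func} are non-positive ($w(p)=-D_K^2(\mu,p)\le 0$), which also makes every pairwise power distance non-negative---a fact both you and the paper rely on but the paper leaves implicit.
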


        %Therefore from the above lemma and the assumptions of the Simplex Distortion Lemma~\ref{l:rad-eps-distortion}, it should follow that for every
        %simplex $\widehat{\sigma}$ in the weighted \Cech filtration $\check{C}_{\alpha}(P)$, the images of the vertices of the simplex form a simplex of radius 
        %at most $\alpha(1+\e)$. 
        %%Therefore $\widehat{f(\sigma)}$ must lie in $\check{C}_{\alpha(1+\e)}(P)$. 
        %This is precisely the statement of the Simplex Distortion Lemma~\ref{l:rad-eps-distortion}. 
        %The formal proof follows below. %, which we shall now prove. %proved more formally in the next lemma. 

        \begin{proof}[Proof of Simplex Distortion Lemma~\ref{l:rad-eps-distortion}]
           Let the simplex $\widehat{\sigma} = \{\widehat{p}_1,\ldots,\widehat{p}_k\}$, where for all $i\in [k]$, $\widehat{p}_i := (p_i,w(p_i))$ is a 
         weighted point, and let $c(\widehat{\sigma})$ and $\rad(\widehat{\sigma})$ denote its center and radius respectively. 

         Since the Gaussian kernel is a characteristic kernel, there exists a Hilbert space $\calH_K$, and a lifting map $\phi:\R^D \to \calH_K$, such that 
         for all $x,y\in \R^D$, $D_K(x,y) = \|\phi(x)-\phi(y)\|_{\calH_K}$ (see e.g.~\cite{DBLP:conf/alt/ChenP17,DBLP:conf/approx/PhillipsT20}). 
         By eqn.~\eqref{eqn:defn-wt-func} the weight of a 
         point $p\in P$ is a weighted sum of squared kernel distances: 
           \[ w(p) := -D_K^2(\mu,p) = -\pth{\frac{1}{|P|}\sum_{y\in P}D_K^2(p,y)- \frac{1}{2|P|^2}\sum_{x,y\in P}D_K^2(x,y)}.\]
         Thus the lifting map $\phi$ extends naturally to the weights $w(p_i)$, $i\in [k]$, as 
         \[ \phi(w(p_i)) = -\pth{\frac{1}{|P|}\sum_{y\in P}\|\phi(p)-\phi(y)\|_{\calH_K}^2- \frac{1}{2|P|^2}\sum_{x,y\in P}\|\phi(x)-\phi(y)\|_{\calH_K}^2},\]
         which allows us to define the weights in the lifted space as $w(\phi(p)) := \phi(w(p))$.

         Applying the Decomposition 
         Lemma~\ref{l:rad-conv-sum} with $\calH = \calH_K$, and the lifted weighted points given by $\phi(\hat{p}_i):=(\phi(p_i),\phi(w(p_i)))$, we have 
         \begin{eqnarray}
            \rad^2(\widehat{\sigma}) &=& \frac{1}{2}\sum_{i,j\in [k]} \lambda_i\lambda_j D(\widehat{p_i},\widehat{p_j}).
         \end{eqnarray}
         Since $G$ is an $\e$-distortion map, for each pair $\widehat{p_i},\widehat{p_j} \in \widehat{\sigma}$, we have 
         \begin{eqnarray}
            (1-\e)\|\phi(p_i)-\phi(p_j)\|_{\calH_K}^2 &\leq& \|G(p_i)-G(p_j)\|^2 \;\;\leq \;\; (1+\e)\|\phi(p_i)-\phi(p_j)\|_{\calH_K}^2
         \end{eqnarray}
         Since $G$ is an $\e$-distortion map for the weight function as well, we get for each $\widehat{p_i}\in \widehat{\sigma}$,
         \begin{eqnarray}
            (1-\e)w(\widehat{p_i}) &\leq& w(\widehat{G(p_i)}) \;\;\leq \;\; (1+\e)w(\widehat{p_i}).
         \end{eqnarray}
         Subtracting the weights $w(\widehat{G(p_i)}), w(\widehat{G(p_j)})$ from the squared distance $\|G(p_i)-G(p_j)\|^2$, and using that 
         $D(\hat{p_i},\hat{p_j}) = D_K^2(p_i,p_j)-w(p_i)-w(p_j) = \|\phi(p_i)-\phi(p_j)\|_{\calH_K}-w(p_i)-w(p_j)$, we get 
         \begin{eqnarray}
            (1-\e)D(\widehat{p_i},\widehat{p_j}) &\leq& D(\widehat{G(p_i)},\widehat{G(p_j)})  \;\;\leq \;\; (1+\e)D(\widehat{p_i},\widehat{p_j}).
         \end{eqnarray}
         Let $\widehat{G(\sigma)}$ denote the image of the simplex $\widehat{\sigma}$  under the map $G$, and $c(\widehat{G(\sigma)})$ be its center. 
         Applying Lemma~\ref{l:cent-conv-sum} on the space $(\R^t,\|.\|)$, we get that $c(\widehat{G(\sigma)})$ is a convex combination of the vertices of $\widehat{G(\sigma)}$, say 
         \[ c(\widehat{G(\sigma)})= \sum_{i\in [k]} \mu_i G(\widehat{p_i}),  \] 
         where $\forall i\in [k]; \;\; \mu_i \geq 0$ and $\sum_{i\in [k]} \mu_i = 1$\footnote{Note that the convex combination $(\mu_i)_{i\in [k]}$ need not 
         be the same as the combination $(\lambda_i)_{i\in [k]}$ for $c(\widehat{\sigma}))$ in the original space.}.
         %Now consider the convex combination of weighted distances $\sum_{i,j\in [k]}\lambda_i\lambda_j D(\widehat{G(p_i)},\widehat{G(p_j)})$.
         Since $G$ is an $\e$-distortion map, using the Decomposition Lemma~\ref{l:rad-conv-sum} we get 
         \begin{eqnarray}
            \rad^2(\widehat{G(\sigma)})&=& 
                             \sum_{i,j\in [k]}\mu_i\mu_j\pth{\frac{1}{2}D(\widehat{G(p_i)},\widehat{G(p_j)})}  \notag \\
             &\geq& \sum_{i,j\in [k]}\frac{\mu_i\mu_j}{2}\pth{(1-\e)D(\widehat{p_i},\widehat{p_j})}  
         \end{eqnarray}
         \begin{eqnarray}
            \mbox{i.e. } \sum_{i,j\in [k]}\frac{\mu_i\mu_j}{2}\pth{D(\widehat{p_i},\widehat{p_j})} 
                    &\leq& \frac{\rad^2(\widehat{G(\sigma)})}{1-\e}.  %\notag \\
                    %&\leq& \frac{\rad^2(\widehat{G(\sigma)})}{1-\e} 
                                                \label{eqn:mus-radg}
         \end{eqnarray}
         %where the last step followed since $\sum_{i,j\in [k]} \mu_i\mu_j = \sum_{i\in[k]}\mu_i\pth{\sum_{j\in [k]}\mu_j} = 1$. 
         Also, by the minimality in the definition of the squared radius of a weighted simplex, we have  
         \begin{eqnarray}
             \rad^2(\widehat{\sigma}) &=& \frac{1}{2}\sum_{i,j\in [k]}\lambda_i\lambda_j D(\widehat{p_i},\widehat{p_j}) \;\;\leq\;\;
             \frac{1}{2}\sum_{i,j\in [k]}\mu_i\mu_j D(\widehat{p_i},\widehat{p_j}),  \text{\hspace{2mm} and}
                                                \label{eqn:rad-mus}
         \end{eqnarray}
         \begin{eqnarray}
             \rad^2(\widehat{G(\sigma)}) &=& \frac{1}{2}\sum_{i,j\in [k]}\mu_i\mu_j D(\widehat{G(p_i)},\widehat{G(p_j)})  
                                             \;\;\leq\;\;
                                             \frac{1}{2}\sum_{i,j\in [k]}\lambda_i\lambda_j D(\widehat{G(p_i)},\widehat{G(p_j)}), \notag \\  
                                         &\leq& 
                                             \sum_{i,j\in [k]} \frac{\lambda_i\lambda_j}{2}\pth{(1+\e)D(\widehat{p_i},\widehat{p_j}) }, 
                                                \label{eqn:radg-lambdas} \\
                                         &=& 
                                             (1+\e)\rad^2(\widehat{\sigma}) %\;\;=\;\; (1+\e)\rad^2(\sigma), 
                                                \label{eqn:radg-rad} 
         \end{eqnarray}
         where in step~\eqref{eqn:radg-lambdas} we again used that $G$ is an $\e$-distortion map, 
         and the last step followed from the Decomposition Lemma~\ref{l:rad-conv-sum}. % and that $\sum_{i,j\in [k]} \mu_i\mu_j = \sum_{i\in[k]}\mu_i\pth{\sum_{j\in [k]}\mu_j} = 1$. 
         Combining equations~\eqref{eqn:mus-radg}, ~\eqref{eqn:rad-mus} and~\eqref{eqn:radg-rad} gives 
         %\begin{eqnarray}
         %   \rad^2(\widehat{\sigma}) &\leq& \frac{\rad^2(\widehat{G(\sigma)})}{1-\e}
         %                                       \label{eqn:radg-lb} 
         %\end{eqnarray}
         %Combining equations~\eqref{eqn:radg-lb} and~\eqref{eqn:radg-rad}, we get 
         \begin{eqnarray}
            (1-\e)\rad^2(\widehat{\sigma}) &\leq& \rad^2(\widehat{G(\sigma)}) \;\;\leq\;\; (1+\e)\rad^2(\widehat{\sigma}),
         \end{eqnarray}
         which completes the proof of the lemma.
        \end{proof}

        We conclude the section by stating the main decomposition theorem.
        \begin{theorem}
        \label{thm:inner-prod-dimred-gauss-kern}
          Let $\hat{P}=\{\hat{p}_1,\ldots,\hat{p}_k\}$ be a set of weighted points such that $\hat{p}=(p,w(p))$, with $p\in \R^D$ and $w:\R^D\to\R$ being the weight 
          function given by the GKPD (equation~\eqref{eqn:defn-wt-func}).
          %Let the Gaussian kernel distance function $D_K$,  
          %$g(x,y) := \|x-y\|^2_{\calM} = \sqrt{\langle x-y,x-y\rangle}$, 
          %$w:P\to \R$ be the corresponding weight function, and 
          Let $G:(\R^D,D_K)\to (\R^t,\|.\|)$ be an 
          $\e$-distortion map for the power distance given by $D(\hat{p}_i,\hat{p}_j)=D_K^2(p_i,p_j) - w(p_i)-w(p_j)$. Then the \Cech 
          filtration $\check{C}_{\alpha}(\widehat{P})$ computed using the GKPD $f_{\mu}^K:\R^D\to \R$, and the corresponding weighted 
          filtration in $\R^t$ under the map $G$, (with the weights being recomputed in $\R^t$ as in~\eqref{eqn:recomp-wts-rt}), 
          are multiplicatively $((1-\e)^{-1}+o(1))$-interleaved. 
          %given by $D(\widehat{x},\widehat{y}) = D_K^2(x,y) - w(\widehat{x})-w(\widehat{y})$ for all $x,y\in P$ 
          %satisfies 
          %\[ \check{C}_{\alpha_-}(\widehat{P})\subseteq  \check{C}_{\alpha}(\widehat{G(P)})\subseteq \check{C}_{\alpha_+}(\widehat{P}),\]
          %where $\alpha_- := \alpha\sqrt{1-\e}$ and $\alpha_+ := \alpha\sqrt{1+\e}$.
          %Note that $\check{C}_{\alpha}(\widehat{G(P)})$ refers to the weighted \Cech filtration under the map $G(.)$, 
          %is $(1-\e)^{-1}$-preserved under 
          %the map $G$ with an additive error of $o(1)$.
        \end{theorem}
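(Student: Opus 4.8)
The plan is to read off the interleaving directly from the Simplex Distortion Lemma~\ref{l:rad-eps-distortion}, reducing the statement to elementary manipulations of sublevel complexes. First I would note that $G$ induces a bijection $\sigma \mapsto \widehat{G(\sigma)}$ between the simplices of $\check{C}_\alpha(\widehat{P})$ and those of its image filtration in $\R^t$: both are weighted \Cech (nerve) complexes on vertex sets in bijection under $G$, so they share a common abstract simplicial complex and differ \emph{only} through their filtration values, namely the squared radii $\rad^2(\widehat{\sigma})$ and $\rad^2(\widehat{G(\sigma)})$. Thus it suffices to compare these two filtration functions simplex by simplex and convert the comparison into nested inclusions of sublevel complexes.

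The key input is the two-sided bound $(1-\e)\rad^2(\widehat{\sigma}) \le \rad^2(\widehat{G(\sigma)}) \le (1+\e)\rad^2(\widehat{\sigma})$ supplied by Lemma~\ref{l:rad-eps-distortion} for every simplex. Before using it I would record that, for the GKPD weights of~\eqref{eqn:defn-wt-func}, we have $w(p)=-D_K^2(\mu,p)\le 0$, since $D_K^2(\mu,p)=\|\phi(\mu)-\phi(p)\|_{\calH_K}^2\ge 0$; consequently every pairwise power distance $D(\widehat{p}_i,\widehat{p}_j)=D_K^2(p_i,p_j)-w(p_i)-w(p_j)$ is nonnegative, and hence, by the Decomposition Lemma~\ref{l:rad-conv-sum}, every squared radius is nonnegative. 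This nonnegativity is what makes the \emph{multiplicative} scaling well posed, with no sign reversal near the bottom of the filtration. Parametrising both filtrations by the squared-radius (power-distance) threshold $t\ge 0$ and writing $F$ for $\check{C}(\widehat{P})$ and $H$ for the image filtration, the upper bound gives $F_{t/\beta}\subseteq H_t$ whenever $\beta\ge 1+\e$, while the lower bound gives $H_t\subseteq F_{t\beta}$ whenever $\beta\ge (1-\e)^{-1}$. Taking $\beta=\max\{1+\e,(1-\e)^{-1}\}=(1-\e)^{-1}$ (the latter dominates, since $(1-\e)^{-1}\ge 1+\e$ is equivalent to $\e^2\ge 0$) yields $F_{t/\beta}\subseteq H_t\subseteq F_{t\beta}$ for all $t$, which is exactly the asserted multiplicative $(1-\e)^{-1}$-interleaving; the standard correspondence between interleaved filtered complexes and their persistence modules then transfers this to the persistence diagrams.

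The step I expect to be delicate, and the source of the additive $o(1)$, is the faithful transport of the \emph{weights} rather than of the pairwise distances. Each weight $w(p)$ is a \emph{difference} of averaged squared kernel distances (cf.~\eqref{eqn:defn-wt-func} and~\eqref{eqn:recomp-wts-rt}), so although every individual squared distance is preserved up to a $(1\pm\e)$ factor, the cancellation inherent in this difference produces an additive slack of order $\e\cdot\tfrac1{|P|^2}\sum_{x,y}D_K^2(x,y)$ in the recomputed power distance. This slack is relatively negligible for simplices whose radius is comparable to the data spread, but it degrades the clean multiplicative bound for simplices of very small radius; carefully bounding it---showing that it contributes only lower-order terms to the interleaving, uniformly over $t$---is what replaces the exact constant $(1-\e)^{-1}$ by $(1-\e)^{-1}+o(1)$. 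Everything else reduces to the combinatorial bookkeeping of the previous paragraph together with Corollary~\ref{cor:eps-distort-wts}, which certifies that $G$ is, up to this slack, an $\e$-distortion map for the GKPD.
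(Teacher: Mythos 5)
Your first two paragraphs are correct and take essentially the same route as the paper: the paper's entire proof is the single sentence that the theorem ``follows directly by applying the Simplex Distortion Lemma to each simplex in $\check{C}_{\alpha}(\widehat{P})$,'' and your bijection-of-simplices plus sublevel-set bookkeeping (together with the useful observation that $w(p)=-D_K^2(\mu,p)\le 0$, hence all pairwise power distances and squared radii are nonnegative, so multiplicative scaling is well posed) is exactly the routine part the paper leaves implicit.

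One correction of scope: your third paragraph does not belong to this theorem's proof. The theorem \emph{hypothesizes} that $G$ is an $\e$-distortion map for the power distance, and by the paper's definition in Section~\ref{sec:low-dist-map-pd} this already includes multiplicative $(1\pm\e)$ preservation of the weights; under that hypothesis Lemma~\ref{l:rad-eps-distortion} gives the clean two-sided bound on squared radii and hence a clean $(1-\e)^{-1}$-interleaving with no additive correction --- in fact $(1-\e)^{-1/2}$ in the paper's own parametrization, where $\sigma\in\check{C}_\alpha$ iff $\rad^2(\widehat{\sigma})\le\alpha^2$, so the claimed constant holds a fortiori and the $o(1)$ is pure slack in the statement, not something the proof must generate. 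The cancellation issue you raise (weights are \emph{differences} of averaged squared kernel distances, so $(1\pm\e)$ preservation of each pairwise distance yields only an additive, not multiplicative, guarantee on the weights) is a genuine concern, but it concerns whether the RFF map \emph{satisfies} the hypothesis; it belongs to Corollary~\ref{cor:eps-distort-wts} and Theorem~\ref{thm:gauss-kern-dimred}, where the paper indeed asserts multiplicative weight preservation without addressing this cancellation. Your worry is well placed, just misfiled: inside this theorem it is vacuous, and leaving it as an unfinished ``delicate step'' here makes a complete proof look conditional when it is not.
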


        The proof of Theorem~\ref{thm:inner-prod-dimred-gauss-kern} 
        follows directly by applying the Simplex Distortion Lemma to each simplex in $\check{C}_{\alpha}(\widehat{P})$. 
        %Since the Gaussian kernel is a characteristic kernel, we use the lifting map to 
        %lift the points to the Hilbert space $\calH_K$. 
        %Now the Simplex Distortion Lemma says that applying an $\e$-distortion map to a point sample 
        %in a high-dimensional space can change the 
        %radii of the minimum enclosing balls of the simplices of the \Cech filtration at most by a $(1\pm\e)$-factor. 

	\section{Proof of Main Result}
	\label{sec:main-result-proof}

        In this section, we shall prove our main result, which is stated below. 

        \begin{theorem}
        \label{thm:gauss-kern-dimred} 
        Given $\sigma>0$, $\e,\delta \in (0,1)$, $\sigma>0$, and 
        a finite set $P \subset \R^D$ consisting of 
        \begin{enumerate} 
            \item[(i)] $n$ points, or %$B_D(0,\rho)\subset \R^D$, 
            \item[(ii)] an arbitrary number of points having Euclidean diameter at most $r\sigma$, %such that $\max_{x,y\in P} \|x-y\|/\sigma \leq r$, 
                        where $r>0$ is a given parameter, %$B_D(0,\rho)\subset \R^D$, 
        \end{enumerate} 
        a Random Fourier Features projection map onto $(i)$ $t := \Omega\pth{\e^{-2}\log (n/\delta)}$ dimensions,   
        or $(ii)$ $t := \Omega\pth{D\e^{-2}\log (Dr/\e \delta)}$ dimensions respectively, is such that 
        the \Cech filtration %$\check{C}_{\alpha}(\widehat{P})$ 
        computed using the GKPD 
        %approximation of the kernel power distance for the Gaussian kernel $K(x,y):= \exp\pth{-\|x-y\|^2/2\sigma^2}$, 
        and the corresponding weighted filtration in $\R^t$, are $((1-\e)^{-1}+o(1))$-interleaved 
        with probability at least $1-\delta$. Further, the Random Fourier Features map $f:\R^D\to \R^t$ can be computed in time $O\pth{nt}$.
        \end{theorem}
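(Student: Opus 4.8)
The plan is to assemble the main theorem directly from the three ingredients established earlier: the Chen--Phillips distortion guarantee for the Random Fourier Features map (Theorem~\ref{thm:chen-phillips-dimred}), its extension to the weight function (Corollary~\ref{cor:eps-distort-wts}), and the interleaving bound of Theorem~\ref{thm:inner-prod-dimred-gauss-kern}. No new geometric argument is needed; the work is in checking that the hypotheses of each result are met and in chaining the conclusions on a single high-probability event.

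First I would fix the RFF map $f:\R^D\to\R^t$ of equation~\eqref{eqn:def-rff-chph}, taking $t=\Omega(\e^{-2}\log(n/\delta))$ in case $(i)$ and $t=\Omega(D\e^{-2}\log(Dr/\e\delta))$ in case $(ii)$. In case $(ii)$ the hypothesis of Theorem~\ref{thm:chen-phillips-dimred} asks that $\|x-y\|/\sigma\leq r$ for all $x,y\in P$; since $P$ has Euclidean diameter at most $r\sigma$, this is exactly satisfied. Applying Theorem~\ref{thm:chen-phillips-dimred} then produces a single event $E$ of probability at least $1-\delta$ on which $f$ is an $\e$-distortion map for the Gaussian kernel distance, i.e. $\|f(x)-f(y)\|^2/D_K^2(x,y)\in(1-\e,1+\e)$ for \emph{every} pair $x,y\in P$ simultaneously.

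Next I would argue entirely on the event $E$. By Corollary~\ref{cor:eps-distort-wts}, on $E$ the same map $f$ is an $\e$-distortion map for the GKPD: the recomputed weights $w_t(\cdot)$ of equation~\eqref{eqn:recomp-wts-rt} are the same weighted combinations of the squared image distances $\|f(x)-f(y)\|^2$ that the weights $w(\cdot)$ of equation~\eqref{eqn:defn-wt-func} are of the $D_K^2(x,y)$, so the pairwise distortion transfers to the weights and hence to the power distance $D(\hat p,\hat q)=D_K^2(p,q)-w(p)-w(q)$. This places us precisely in the setting of the decomposition theorem, with $f$ an $\e$-distortion map $(\R^D,D_K)\to(\R^t,\|\cdot\|)$ for the power distance. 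Invoking Theorem~\ref{thm:inner-prod-dimred-gauss-kern} with $G=f$, the \Cech filtration $\check{C}_\alpha(\widehat P)$ built from the GKPD and its image filtration in $\R^t$ (weights recomputed via~\eqref{eqn:recomp-wts-rt}) are multiplicatively $((1-\e)^{-1}+o(1))$-interleaved. As all of this holds on $E$, the interleaving holds with probability at least $1-\delta$.

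Finally, the running-time claim follows from the explicit form of $f$: after drawing the $t/2$ frequency vectors $\omega_i$, each of the $n$ images $f(p)=\bigotimes_{i=1}^{t/2} f_i(p)$ is assembled from $t/2$ cosine/sine pairs of linear forms, so the full collection of $nt$ output coordinates is produced in $O(nt)$ steps. I do not expect a genuine obstacle here, since the substantive content has been discharged by the earlier lemmas; the one point requiring care is that the distortion guarantee for the \emph{pairwise kernel distances}, the distortion guarantee for the \emph{weights}, and the \emph{interleaving} must all refer to the same realization of the random map. This is exactly why I would phrase the entire argument on the single event $E$ rather than reasoning about the three guarantees as independent high-probability statements.
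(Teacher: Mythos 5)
Your proposal is correct and follows essentially the same route as the paper's own proof: it chains Theorem~\ref{thm:chen-phillips-dimred} (the RFF distortion guarantee, with the diameter hypothesis checked in case $(ii)$), Corollary~\ref{cor:eps-distort-wts} (distortion of the recomputed weights, hence of the power distance), and Theorem~\ref{thm:inner-prod-dimred-gauss-kern} (the interleaving), all on one high-probability realization of the random map. Your explicit insistence on a single event $E$ and your brief justification of the $O(nt)$ running time are, if anything, slightly more careful than the paper's write-up, which asserts these points without elaboration.
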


        %The central idea of the proof is to use Theorem~\ref{thm:inner-prod-dimred-gauss-kern}, which guarantees the approximation of the PH with reduced dimensionality, provided
        %the existence of a mapping of the points which approximates the Gaussian kernel distance in the original space, as the Euclidean distance in the target space. In other words, 
        %it only remains to 
        %find a mapping $f:\R^D\to \R^t$ which is an $\e$-distortion mapping for the Gaussian kernel power distance. For this we shall use 
        %Theorem~\ref{thm:chen-phillips-dimred} of Chen and Phillips.
        %%This follows from a result of Chen and Phillips~\cite{DBLP:conf/alt/ChenP17}
        %%showing the existence of $\e$-distortion mappings for the Gaussian kernel distance $D_K$. 
        %and extend their result to show that their mapping is in fact an $\e$-distortion 
        %mapping for the power distance, as it also approximately preserves the weights of the points. %The Chen-Phillips mapping is summarized below.

        \begin{proof}[Proof of Theorem~\ref{thm:gauss-kern-dimred}] 

           In order to prove that an RFF mapping onto $t$ dimensions gives a data set whose weighted \Cech filtration (with the weights being 
        recomputed in the image space) is interleaved with the original filtration, it suffices to show that for an arbitrary weighted simplex $\sigma$ 
        the radius of $\sigma$ under the GKPD is distorted at most by a $(1\pm \e)$-factor under the RFF mapping, with high probability. 
        Thus, our overall plan shall be to apply Theorem~\ref{thm:inner-prod-dimred-gauss-kern}, using the $\e$-distortion map obtained 
        from Theorem~\ref{thm:chen-phillips-dimred}. The following facts, proved in Section~\ref{sec:mepb} -- especially Corollary~\ref{cor:new-props-gkpd}  --  
        and Section~\ref{sec:low-dist-map-pd} shall be central to the proof.

        \begin{enumerate}
           \item The squared radius of the original weighted simplices in $\check{C}_{\alpha}(\widehat{P})$ under the GKPD, in $\R^D$, is computable as a linear 
                 combination of the pairwise GKPDs of the vertices of $\hat{\sigma}$.
           \item The squared radius of the weighted simplices in $\check{C}_{\alpha}(\widehat{G(P)})$ under the Euclidean distance and weights being recomputed in 
                 $\R^t$ (as in~\eqref{eqn:recomp-wts-rt}, is computable as a linear combination of the weighted Euclidean distances of the vertices of $\widehat{G(\sigma)}$.  
           \item The RFF mapping approximately preserves the pairwise GKPDs between the vertices of $\hat{\sigma}$, and hence approximately preserves 
                 \emph{every} linear combination of these power distances, and in particular, the linear combinations corresponding to the original and the new radii 
                 of $\widehat{\sigma}$, which can now be compared.
        \end{enumerate}
        %thm:phil-tai-sketch-hd}. 
        %combine Theorems~\ref{thm:phil-tai-sketch-hd} and~\ref{thm:inner-prod-dimred}. 

        To apply Theorem~\ref{thm:inner-prod-dimred-gauss-kern} therefore, let us verify that given an $\e$-distortion map, the 
        conditions for the theorem hold. 
        Applying Theorem~\ref{thm:chen-phillips-dimred}, we get the function $f:\R^D\to \R^t$ such that  
        \[ \forall x,y\in P:\;\; (1-\e)D_K(x,y) \leq \|f(x)-f(y)\| \leq (1+\e)D_K(x,y).\]
        Thus we get that $f$ is an $\e$-distortion map for the kernel distance $D_K$. 
        For the weight function $w:P\to \R$, observe from eqn.~\eqref{eqn:defn-wt-func} that the weight of a point $p\in P$ is a weighted sum of squared kernel distances. 
        Now with the weights in $\R^t$ being redefined as in eqn.~\eqref{eqn:recomp-wts-rt}, Corollary~\ref{cor:eps-distort-wts} shows that the weights 
        are $(1\pm\e)$-preserved. 
           %\[ w(p) := -D_K^2(\mu,p) = -\pth{\frac{1}{|P|}\sum_{y\in P}D_K^2(p,y)- \frac{1}{2|P|^2}\sum_{x,y\in P}D_K^2(x,y)}.\]

        %Since the kernel distance between each pair of points is $(1\pm \e)$-preserved, therefore their weighted sum is also preserved. 
        %just the kernel distance $D_K(\mu,p)$ where $\mu$ denotes 
        %the uniform distribution on $P$, 
        %and since $\widehat{Im}(\phi)$ is closed under linear combinations, therefore similar arguments apply as for the distance $D_K$, 
        %to yield that $\calG$ is an $\e$-distortion map for $w(.)$ also.
        Therefore, all the conditions of Theorem~\ref{thm:inner-prod-dimred-gauss-kern}
        are satisfied. 

        %Before applying the theorem, we state a couple of easy observations:\\
        %\begin{enumerate}
        %   \item[(i)] the weighted \Cech filtration $\check{C}_{\alpha}(\widehat{P})$ built 
        %using $D_K$ is the same as the \Cech filtration built in $\widehat{Im}(\phi)|_P$ using $d_{\calH_K}(.,.)$, and \\
        %%using $D_K$ is the same as the \Cech filtration built in $\calB_K$ using $d_{\calH_K}(.,.)$, and \\
        %   \item[(ii)] the image of $\phi(\widehat{P})$ under $\calG$ from $\widehat{Im}(\phi)$ to $\R^d$ is the same as the image of $\widehat{P}$ under $G$.
        %   %\item[(ii)] the image of $\phi(\widehat{P})$ from $\calB_K$ to $\R^d$ is the same as the image of $\widehat{P}$ under $G$.
        %\end{enumerate}

        Now we can apply Theorem~\ref{thm:inner-prod-dimred-gauss-kern} to get that the weighted \Cech filtration $\check{C}_{\alpha}(\widehat{P})$ built 
        using the kernel distance function, interleaves with the \Cech filtration built 
        on the image of the weighted point set $\widehat{P}$ under $G$ using the Euclidean distance, i.e. $\check{C}_{\alpha}(\widehat{G(P)})$, as follows,
          \[ \check{C}_{\alpha_-}(\widehat{P})\subseteq  \check{C}_{\alpha}(\widehat{G(P)})\subseteq \check{C}_{\alpha_+}(\widehat{P}).\]
          where $\alpha_- := \alpha\sqrt{1-\e}$ and $\alpha_+ := \alpha\sqrt{1+\e}$.
        %Finally using that $\beta = o(1/n^2)$, we get that for $\alpha = \Omega(1/n)$, $(1+\e) \alpha^2 +\beta \leq (1+\e + o(1))\alpha^2$, and similarly for the lower 
        %bound. 
        Finally, since for $\e \in (0,1)$ we have $1<(1+\e)<(1-\e)^{-1}$, we can apply the definition of $\beta$-interleaved in Section~\ref{sec:persh} 
        to get the statement of the Theorem.
        %This completes the proof of the Theorem.
        \end{proof}

	%\section{Proof of Theorem~\ref{thm:inner-prod-dimred-gauss-kern}}
	%\label{sec:other-results-proofs}

        %In this section, we shall prove Theorem~\ref{thm:inner-prod-dimred-gauss-kern}, which says that the \Cech filtration computed 
        %using the kernel power distance after applying an 
        %$\e$-distortion map to the original point sample, $(1\pm\e)$-interleaves with the \Cech filtration computed in the original 
        %space. 

        \section{Conclusion}
        \label{sec:concl}
           We have shown that the Random Fourier Features map can be used to reduce the dimensionality of input data, for building persistence diagrams. 
        It should be noted that the reduced dimension -- similar to applications of the Johnson-Lindenstrauss Lemma -- has an $\e^{-2}$ factor, and therefore cannot be really 
        small if the allowed error $\e$ is very small. Still it can reduce the dimensionality from a few millions (typical in AI applications) to a few hundreds. It would be 
        interesting if this technique could be combined with other dimensionality reduction techniques such as PCA or gradient descent based techniques.  \\

\bibliographystyle{plainurl}
	\bibliography{bibliography} 

\end{document}